\newtheorem{theorem}{Theorem}
\newtheorem{corollary}[theorem]{Corollary}
\DeclareMathOperator{\Tr}{Tr}
\newcommand{\CombinationSet}[1]{\mathcal{C} \left(#1\right)}
\newcommand{\EvenCombinationSet}[1]{\bar{\mathcal{C}} \left(#1\right)}
\DeclareMathOperator{\Differential}{D}
\newcommand{\DifferentialAt}[1]{\left . \Differential \right \vert_{#1}}
\newcommand{\DifferentialNAt}[2]{\left . \Differential^{#1} \right \vert_{#2}}
\newcommand{\SymmetricGroup}{\mathrm{Sym}}
\newcommand{\SymmetricGroupN}[1]{\mathrm{Sym}_{#1}}
\begin{document}
\setlength{\abovedisplayskip}{1.5ex}
\setlength{\belowdisplayskip}{1.5ex}

\title{Existence and Construction of Exact FRG Flows of a UV-Interacting Scalar Field Theory}
\author{Jobst Ziebell}
\affiliation{Theoretisch-Physikalisches Institut, Abbe Center of Photonics, Friedrich-Schiller-Universit\"at Jena, Max-Wien-Platz 1, 07743 Jena, Germany}

\begin{abstract}
We prove the existence and give a construction procedure of Euclidean-invariant exact solutions to the Wetterich equation\cite{src:Wetterich1993} in $d > 2$ dimensions satisfying the naive boundary condition of a massive and interacting real scalar $\phi^4$ theory in the ultraviolet limit as well as a generalised free theory in the infrared 
limit.
The construction produces the momentum-dependent correlation functions to all orders through an iterative scheme, based on a self-consistent ansatz for the four-point function.
The resulting correlators are bounded at all regulator scales and we determine explicit bounds capturing the asymptotics in the UV and IR limits.
Furthermore, the given construction principle may be extended to other systems and might become useful in the study of general properties of exact solutions.
\end{abstract}

\maketitle

\section{Introduction}
In quantum field theory and related fields one rarely has access to exact expressions for quantities of interest.
Instead, one generally resorts to approximation schemes such as truncations of power series or lattice dicretisations.
But the use of such approximations raises the question of their respective reliability.
In terms of observables, one is interested in quantitative bounds on deviations from exact values.
However, the necessity of renormalisation turns the analysis of such deviations into a complicated task.
They are commonly studied by investigating artificial regulator dependencies, the apparent convergence of truncation schemes or by purely qualitative methods such as apparent stability of features like fixed points or phase transitions.
Nonetheless, it usually remains very difficult and often practically impossible to provide quantitative bounds on absolute errors and hence to explicitly specify the region of applicability of any given approximation procedure.

There are some notable exceptional cases in which exact results have been obtained such as the Schwinger model\cite{src:Schwinger1962}, the Thirring model\cite{src:Thirring1958} and the lattice $\phi^4_3$ and $\phi^4_{d>4}$ theories\cite{src:Brydges1983,src:Aizenman1981}.
Further exact results in quantum field theoretical models\cite{src:Rychkov2020}, condensed matter physics\cite{src:Schuetz2005} as well as in hydrodynamics\cite{src:Canet2016} and statistical mechanics\cite{src:Benitez2013} have been obtained through the use of the \enquote{functional renormalisation group} which is also at the core of this paper.
It constitutes a renormalisation scheme of the path integral quantisation and leads to well known expressions for the renormalisation group flow.
These include the Wegner-Houghton\cite{src:WegnerHoughton1973}, the Polchinski\cite{src:Polchinski1984} and the Wetterich\cite{src:Wetterich1993} equations, the latter being at the focus of this work.
In particular, it is also routinely used in studies of asymptotic safety scenarios of quantum gravity\cite{src:ReuterSaueressig2019,src:Percacci2017}.
For reviews and further applications, see \cite{src:EichhornEtAl2020,src:MetznerEtAl2012,src:Berges2002,src:Pawlowski2007,src:Gies2012,src:Delamotte2012,src:Braun2012,src:Nagy2014}.

The expansion of the Wetterich equation in powers of quantum fields corresponds to an expansion in one-particle irreducible vertices.
It constitutes a countably infinite tower of non-linear ordinary differential equations encoding the renormalisation group flow of the correlation functions of the quantum field theory at hand.
As will be demonstrated, it is possible to bootstrap formally (in the sense of not necessarily analytic) exact solutions to these equations by providing a well-behaved, consistent set of low-order correlation functions and giving an explicit construction procedure for the higher-order ones.
In this paper the above method is employed to construct exact solutions to the Wetterich equation for quantum field theories on Euclidean spacetimes of dimensions $d > 2$ that satisfy the naive boundary conditions of massive and interacting real scalar $\phi^4$ theories in the classical limit.
This boundary condition corresponds to strictly finite renormalisations of all coupling constants and hence does not agree with the rigorously known results for the $\phi^4_3$ theory.
In particular, the constructed solutions are shown to correspond to generalised free quantum field theories.

Nonetheless, I believe that exact solutions may provide good grounds for further research on the functional renormalisation group and its applications.
Through their constructive nature the solutions given in this paper may also be able to open the door to more rigorous error estimates because the knowledge of bounds on lower-order correlators may be employed to produce bounds on higher-order ones.
\section{The Functional Renormalisation Group}
Let us start with the Euclidean path integral quantisation of a classical action $S_\Lambda$ for a real scalar field at a UV regularisation scale $\Lambda > 0$.
Then
\begin{equation}
\begin{aligned}
\exp &\left[ -\Gamma \left( \phi \right) \right]
= \\
&\int \mathcal{D}_\Lambda \psi
\exp \left[ - S_\Lambda \left( \phi + \psi \right) + \left( \DifferentialAt{\phi} \Gamma \right) \left( \psi \right) \right] \, ,
\end{aligned}
\end{equation}
where $\mathcal{D}_\Lambda$ denotes the regularised path integral measure and $\Gamma$ is the effective action.
For clarity and brevity we shall use Fréchet derivatives instead of functional derivatives throughout this work which are related by
\begin{equation}
\left( \DifferentialAt{\phi} \Gamma \right) \left( \psi \right)
=
\int_{\mathbb{R}^d}
\frac{\delta \Gamma \left( \phi \right)}{\delta \phi \left( x \right)}
\psi \left( x \right)
\mathrm{d} x
\end{equation}
for all test functions $\psi$.
Introducing the effective average action $\Gamma_{k,\Lambda}$, one obtains\cite{src:ReuterWetterich1994,src:Manrique2009}
\begin{equation}
\label{eq:PathIntegralRepresentation}
\begin{aligned}
\exp \left[ -\Gamma_{k,\Lambda} \left( \phi \right) \right]
&=
\int \mathcal{D}_\Lambda \psi
\exp \Big[ - S_\Lambda \left( \phi + \psi \right) \\
&\phantom{=}
+ \left( \DifferentialAt{\phi} \Gamma_{k,\Lambda} \right) \left( \psi \right)
- \frac{1}{2} \left \langle \psi, R_k \psi \right \rangle
\Big] \, ,
\end{aligned}
\end{equation}
where $R_k$ is a suitable scale-dependent regulator and $\left \langle \cdot, \cdot \right \rangle$ denotes the standard inner product on $L^2 \left( \mathbb{R}^d \right)$.
In particular, for $k \to 0$ the regulator $R_k$ should vanish such that $\lim_{k \to 0} \Gamma_{k,\Lambda}$ reproduces the ordinary effective action $\Gamma$.
On the other extreme $R_k$ should diverge when $k \to \Lambda$ so that it acts as a delta functional with respect to the path integral ensuring $\lim_{k \to \Lambda} \Gamma_{k,\Lambda} \approx S_\Lambda$\cite{src:Wetterich1993} although it is known that this correspondence involves a reconstruction problem\cite{src:Manrique2009}.

Through the standard derivations one also obtains the Wetterich equation\cite{src:Wetterich1993,src:Ellwanger1994,src:Morris1994}
\begin{equation}
\label{eq:WetterichEquation}
\partial_k \Gamma_{k,\Lambda} \left( \phi \right) = \frac{1}{2} \Tr_\Lambda \left[ \left( \partial_k R_k \right) \left( \left . \Gamma_{k,\Lambda}^{\left(2\right)} \right \vert_{\phi} + R_k \right)^{-1} \right] \,,
\end{equation}
where $\left . \Gamma_{k,\Lambda}^{\left(2\right)} \right \vert_{\phi}$ denotes the second derivative of $\Gamma_{k,\Lambda}$ at $\phi$ interpreted as an operator\footnote{i.e for all test functions $\psi_1, \psi_2$ on $\mathbb{R}^d$ we have \[ \left \langle \psi_1^\ast, \left . \Gamma_k^{\left(2\right)} \right \vert_{\phi} \psi_2 \right \rangle = \int_{\mathbb{R}^d} \psi_1 \left . \Gamma_{k,\Lambda}^{\left(2\right)} \right \vert_{\phi} \psi_2 = \left( \DifferentialNAt{2}{\phi} \Gamma_k \right) \left( \psi_1, \psi_2 \right) \, . \]}.
For particularly well-behaved regulators one may now simply take the limit $\Lambda \to \infty$ in this equation, removing the necessity of a UV cutoff $\Lambda$ and leading to the \enquote{$\Lambda$-free}\cite{src:Manrique2009} form of equation \ref{eq:WetterichEquation}.
Let us refer to the resulting object of interest as $\Gamma_k = \lim_{\Lambda \to \infty} \Gamma_{k,\Lambda}$ to which we shall devote our attention throughout this paper.

Expanding the right hand side of this $\Lambda$-free equation in powers of a real scalar field $\phi$ gives us
\begin{equation}
\label{eq:WetterichExpandedFull}
\partial_k \Gamma_k \left( \phi \right) = \frac{1}{2} \sum_{n = 1}^\infty \frac{1}{n!} \Tr \left[ \left( \partial_k R_k \right) \left( \DifferentialNAt{n}{0} A \right) \left( \phi^{\otimes n} \right) \right] \, ,
\end{equation}
where $A \left( \phi \right) = \left( \left. \Gamma_k^{\left(2\right)} \right \vert_\phi + R_k \right)^{-1}$ and the $n = 0$ term is deleted because it does not contribute to observables.
We have also assumed that the sum may be taken out of the trace corresponding to an interchange of limits.
$\phi^{\otimes n}$ denotes the tensor product $\phi \otimes ... \otimes \phi$ with a total of $n$ factors.

Expanding the left hand side in powers of $\phi$ and comparing the coefficients leads to
\begin{equation}
\label{eq:WetterichExpandedIntermediate}
\partial_k \DifferentialNAt{n}{0} \Gamma_k \left( \phi^{\otimes n} \right)
=
\frac{1}{2} \Tr \left[ \left( \partial_k R_k \right) \left( \DifferentialNAt{n}{0} A \right) \left( \phi^{\otimes n} \right) \right] \, .
\end{equation}

We now wish to find an explicit expression for $\DifferentialNAt{n}{0} A$ which may be achieved inductively by noting that
\begin{equation}
\left( \DifferentialAt{\phi} A \right) \left( \psi \right) = - A \left( \phi \right) \circ \left( \DifferentialAt{\phi} \Gamma_k^{(2)} \right) \left( \psi \right) \circ A \left( \phi \right) \, ,
\end{equation}
or for short
\begin{equation}
\label{eq:PropagatorFirstDerivative}
\DifferentialAt{\phi} A = - A \circ \Differential \Gamma_k^{(2)} \circ A \, .
\end{equation}
An educated guess produces the induction hypothesis
\begin{equation}
\label{eq:PropagatorNthDerivative}
\Differential^n A =
\sum_{c \in \CombinationSet{n}} \left( -1 \right)^{\#c} \frac{n!}{c!}
A \circ
\prod_{l = 1}^{\# c}
\left[ \Differential^{c_l} \Gamma_k^{(2)} \circ A \right] \, ,
\end{equation}
where $\CombinationSet{n}$ denotes the set of all multi-indices with positive entries that are combinations\footnote{Partitions including permutations} of the natural number $n$, e.g
\begin{equation}
\CombinationSet{3} = \left \{
\left( 1, 1, 1 \right),
\left( 1, 2 \right),
\left( 2, 1 \right),
\left( 3 \right)
\right \} \, .
\end{equation}
In equation \ref{eq:PropagatorNthDerivative}, $\# c$ is the length of such a multi-index and
\begin{equation}
c! = \prod_{l = 1}^{\# c} \left( c_l! \right) \, ,
\qquad
\left \vert c \right \vert = \sum_{l = 1}^{\# c} c_l = n
\end{equation}
for all $n \in \mathbb{N}$ and any $c \in \CombinationSet{n}$.
The inductive proof of equation \ref{eq:PropagatorNthDerivative} is given in appendix \ref{apx:ProofPropagatorDerivative}.
Inserting this result into equation \ref{eq:WetterichExpandedIntermediate} then yields
\begin{equation}
\label{eq:WetterichExpanded}
\begin{aligned}
\partial_k &\DifferentialNAt{n}{0} \Gamma_k \left( \phi^{\otimes n} \right)
= \frac{1}{2} \sum_{c \in \CombinationSet{n}}
\left( -1 \right)^{\# c}
\frac{n!}{c!}
\Tr \Bigg \{ \\
&\left( \partial_k R_k \right)
A \left( 0 \right)
\prod_{l = 1}^{\# c}
\left[
\left( \DifferentialNAt{c_l}{0} \Gamma_k^{\left(2\right)} \right) \left( \phi^{\otimes c_l} \right)
A \left( 0 \right)
\right]
\Bigg \} \, .
\end{aligned}
\end{equation}
Equation \ref{eq:WetterichExpanded} expresses all possible one-loop diagrams generated by an arbitrary action $\Gamma_k$ contributing to the renormalisation group flow of a given correlation function.
As is common practice, we shall work with them explicitly in the Fourier picture\footnote{We define the Fourier transform $\tilde{f}$ of a measurable function $f : \mathbb{R}^d \to \mathbb{R}$ as \[ \tilde{f} \left( p \right) = \left( 2 \pi \right)^{-\frac{d}{2}} \int_{\mathbb{R}^d} \exp \left[ - i p x \right] f \left( x \right) \mathrm{d} x \] whenever the integral converges.}.
Restricting ourselves to translation-invariant quantum field theories, for every $n \in \mathbb{N}$ there is a ($k$-dependent) function $\kappa_n$\footnote{The prefactors $\left( 2 \pi \right)^{\frac{d}{2} \left( 2 - n \right)}$ are chosen such that they vanish in position space.} such that
\begin{equation}
\label{eq:KappaNDefinition}
\begin{aligned}
\left( \DifferentialNAt{n}{0} \Gamma_k \right) &\left( \phi_1 \otimes ... \otimes \phi_n \right)
= 
\left( 2 \pi \right)^{\frac{d}{2} \left( 2 - n \right)}
\int_{\left( \mathbb{R}^d \right)^{n-1}} \\
&\kappa_n \left( p_1, ..., p_{n-1}; k \right)
\tilde{\phi}_1 \left( p_1 \right) ... \tilde{\phi}_{n-1} \left( p_{n-1} \right) \\
&\tilde{\phi}_n \left( - \left[ p_1 + ... + p_{n - 1} \right] \right) \mathrm{d} p_1 ... \mathrm{d} p_{n-1}
\end{aligned}
\end{equation}
for all test functions $\phi_1, .., \phi_n$.
These $\kappa_n$ are precisely the commonly considered one-particle irreducible $n$-point functions in Fourier space stripped of their delta-functions:
\begin{equation}
\Gamma_k^{\left( n \right)} \left( p_1, ..., p_n \right) = \kappa_n \left( p_1, ..., p_{n-1} \right) \delta \left( p_1 + ... + p_n \right)
\end{equation}
Consequently, any such $\DifferentialNAt{n}{0} \Gamma_k$ is translation invariant in the sense that
\begin{equation}
\begin{aligned}
\left( \DifferentialNAt{n}{0} \Gamma_k \right) &\left( T \phi_1 \otimes ... \otimes T \phi_n \right) \\
&= \left( \DifferentialNAt{n}{0} \Gamma_k \right) \left( \phi_1 \otimes ... \otimes \phi_n \right)
\end{aligned}
\end{equation}
for all translations $T$ of $\mathbb{R}^d$ by the properties of the Fourier transform.
Furthermore, such a $\DifferentialNAt{n}{0} \Gamma_k$ is obviously $\mathcal{O} \left( d \right)$-invariant whenever the corresponding $\kappa_n$ is\footnote{The action of $\mathcal{O} \left( d \right)$ on a function $g : \left( \mathbb{R}^d \right)^n \to \mathbb{R}$ is the standard one, defined as \[ \left(O g\right) \left( p_1, ..., p_n \right) = g \left( O^{-1} p_1, ..., O^{-1} p_n \right) \] for all $O \in \mathcal{O} \left( d \right)$ and $p_1, ..., p_n \in \mathbb{R}^d$.}.
To simplify equations from this point on, any $k$-dependence will be suppressed whenever it does not lead to ambiguities.
Since Fréchet derivatives are invariant under permutations there are corresponding symmetries of the $\kappa_n$:
For all $\sigma \in \SymmetricGroupN{n-1}$
\begin{equation}
\kappa_n \left( p_{\sigma \left(1\right)}, ..., p_{\sigma \left(n-1\right)}\right) = \kappa_n \left( p_1, ..., p_{n-1} \right)
\end{equation}
and also
\begin{equation}
\label{eq:MomentumConservation}
\begin{aligned}
\kappa_n &\left( p_1, ..., p_{n-1} \right)
= \\
&\kappa_n \left( -\left[ p_1 + ... + p_{n-1} \right], p_2, ..., p_{n-1} \right)
\end{aligned}
\end{equation}
for all $p_1, ..., p_{n-1} \in \mathbb{R}^d$.
We shall refer to functions $f$ satisfying these symmetries as $\SymmetricGroupN{n-1}^*$ symmetric\footnote{\enquote{$\SymmetricGroup$} standing for the symmetric (permutation) group and \enquote{$\ast$} for the involution given by \[ \left( p_1, ..., p_{n-1} \right) \mapsto \left( -\left[ p_1 + ... + p_{n-1} \right], p_2, ..., p_{n-1} \right) \, . \] The full group $\SymmetricGroupN{n-1}^*$ of symmetries is isomorphic to $\SymmetricGroupN{n}$ but the underlying action is a non-standard one on $\left(n-1\right)$-tuples, hence the alternative naming.}.
It remains to phrase equation \ref{eq:WetterichExpanded} in terms of the correlation functions $\kappa_n$.
While the left-hand side is simple, let us look at the right-hand side first:
If the expression within the trace is viewed as an integral operator the trace can be evaluated by integration along the diagonal.
From the definition of the $\kappa_n$ we already know the integral form of the derivatives of $\Gamma_k$ and it only remains to express $R_k$ appropriately.
It is common practice to define $R_k$ in momentum space as a family of multiplication operators parametrised by $k$, i.e
\begin{equation}
\left[\mathcal{F} R_k \mathcal{F}^{-1} \phi \right] \left( p \right) = \bar{r} \left( p; k \right) \phi \left( p \right)
\end{equation}
for some $\bar{r} : \mathbb{R^d} \times \mathbb{R}_{> 0} \to \mathbb{R}$\footnote{We use $\bar{r}$ to avoid confusion with the commonly used shape function $r$ defined as \[ \bar{r} \left( p \right) = p^2 r \left( \frac{p^2}{k^2} \right) \, .\]}.
The role of $\bar{r}$ is to contribute a \enquote{momentum-dependent mass} that protects against IR singularities and at the same time screens UV divergences at any finite scale $k > 0$ by a rapid decay for large momenta.
Thus, $\bar{r}$ and $\kappa_2$ have to be treated on similar footings so that we have to demand
\begin{equation}
\label{eq:RegulatorIsSym1AstSymmetric}
\bar{r} \left( q \right) = \bar{r} \left( -q \right)
\end{equation}
for all $q \in \mathbb{R}^d$ in accordance with the $\SymmetricGroupN{1}^*$ symmetry of $\kappa_2$.
Choosing a $\SymmetricGroupN{1}^*$-violating regulator would generate further symmetry-breaking terms leading to undesirable contributions that are not translation invariant.
The trace in equation \ref{eq:WetterichExpanded} then becomes
\begin{equation}
\label{eq:WetterichExpandedRHSTrace}
\begin{aligned}
\Tr \left \{ ... \right \}
&=
\left( 2 \pi \right)^{- \frac{\left \vert c \right \vert d}{2}} 
\int_{\left(\mathbb{R}^d\right)^{\left \vert c \right \vert - 1}} 
\lambda_c \left( p_1, ..., p_{\left \vert c \right \vert - 1} \right) \\
&\phantom{=} \times
\tilde{\phi} \left( p_1 \right) ... \tilde{\phi} \left( p_{\left \vert c \right \vert - 1} \right) \\
&\phantom{=} \times
\tilde{\phi} \left( - \left[ p_1 + ... + p_{\left \vert c \right \vert - 1} \right] \right)
\mathrm{d} p_1 ... \mathrm{d} p_{\left \vert c \right \vert - 1} \, ,
\end{aligned}
\end{equation}
where
\begin{equation}
\begin{aligned}
\lambda_c &\left( p^1_1, ..., p^1_{c_1}, ..., p^{\# c}_{c_{\# c} - 1} \right) = 
\int_{\mathbb{R}^d} \frac{\left( \partial_k \bar{r} \right) \left( q \right)}{\left[ \kappa_2 \left( q \right) + \bar{r} \left( q \right) \right]^2} \\
&\kappa_{2 + c_{\# c}} \left(
p^{\# c}_1, ..., p^{\# c}_{c_{\# c} - 1}, - \sum_{a=1}^{\# c} \sum_{b=1}^{c_{\# c} - 1} p^a_b, q
\right) \\
&\prod_{l = 1}^{\# c -1}
\frac{\kappa_{2 + c_l} \left( p^l_1, ..., p^l_{c_l}, q - \sum_{a=1}^{l} \sum_{b=1}^{c_l} p^a_b \right)}{\left( \kappa_2  + \bar{r} \right) \left( q - \sum_{a=1}^{l} \sum_{b=1}^{c_l} p^a_b \right)} \mathrm{d} q \, .
\end{aligned}
\end{equation}
This represents an integral over an arbitrary one-loop diagram containing all possible vertices in closed form.

Let us now collect the above and rewrite equation \ref{eq:WetterichExpanded} as
\begin{equation}
\label{eq:WetterichExpandedIntermediate2}
\begin{aligned}
&0 = 
\int_{\left( \mathbb{R}^d \right)^{n-1}}
\Big[ \left( 2 \pi \right)^d \left( \partial_k \kappa_n \right) \left( p_1, ..., p_{n-1} \right) \\
&\phantom{\int_{\left( \mathbb{R}^d \right)^{n-1}} \Big[}
- \frac{1}{2} \sum_{c \in \CombinationSet{n}}
\left( -1 \right)^{\# c}
\frac{n!}{\; c!} \lambda_c \left( p_1, ..., p_{n-1} \right)
\Big] \\
&\phantom{0 = } \times
\tilde{\phi} \left( - \left[ p_1 + ... + p_{n - 1} \right] \right)
\tilde{\phi} \left( p_1 \right) ... \tilde{\phi} \left( p_{n-1} \right) \\
&\phantom{0 = }
\mathrm{d} p_1 ... \mathrm{d} p_{n-1} \, .
\end{aligned}
\end{equation}
Before the fundamental lemma of the calculus of variations may be invoked here, we need to polarise this equation, allowing for arbitrary test functions of the form $\phi_1 \otimes ... \otimes \phi_n$ instead of purely diagonal ones $\phi^{\otimes n}$.
However, this polarisation will leave the $\kappa_n$ part invariant (after proper substitutions of the integral variables) by equation \ref{eq:KappaNDefinition} due to its $\SymmetricGroupN{n-1}^\ast$ symmetry.
Therefore, such a polarisation is exactly the same as a $\SymmetricGroupN{n-1}^\ast$ symmetrisation.
Hence, simply defining
\begin{equation}
\bar{\lambda}_c \left( p_1, ..., p_{n-1} \right) = 
\frac{1}{n!} \sum_{\sigma \in \SymmetricGroupN{n}}
\lambda_c \left( p_{\sigma \left( 1 \right)}, ..., p_{\sigma \left( n- 1 \right)} \right)
\end{equation}
where we set $p_n = - \left[ p_1 + ... + p_{n-1} \right]$ sidesteps the explicit polarisation.
Invoking the fundamental lemma of the calculus of variations then leads to
\begin{equation}
\label{eq:WetterichExpandedWithLambdaBars}
\partial_k \kappa_n = \frac{1}{2 \left( 2 \pi \right)^d} \sum_{c \in \CombinationSet{n}} \left( -1 \right)^{\# c}  \frac{n!}{c!} \bar{\lambda}_c \, .
\end{equation}
This is an equivalent formulation of equation \ref{eq:WetterichExpanded} and will be referred to as the flow equation of the correlation function $\kappa_n$.
While these equations for arbitrary $n \in \mathbb{N}$ are certainly implied by the $\Lambda$-free form of equation \ref{eq:WetterichEquation} if
\begin{itemize}
\item $\Gamma_k$ is analytic,
\item $\phi \mapsto \left( \left . \Gamma_k^{\left(2\right)} \right \vert_{\phi} + R_k \right)^{-1}$ is analytic,
\item the sum in equation \ref{eq:WetterichExpandedFull} may be pulled out of the trace
\end{itemize}
the converse is not necessarily true:
A given solution might not correspond to an analytic $\Gamma_k$, that is the formal series
\begin{equation}
\begin{aligned}
\sum_{n = 1}^\infty
&\frac{\left( 2 \pi \right)^{\frac{d}{2} \left( 2 - n \right)}}{n!}
\int_{\left( \mathbb{R}^d \right)^{n-1}}
\kappa_n \left( p_1, ..., p_{n-1} \right) \\
&\times \tilde{\phi} \left( - \left[ p_1 + ... + p_{n-1} \right] \right) \tilde{\phi} \left( p_1 \right) ... \tilde{\phi} \left( p_{n-1} \right) \\
&\mathrm{d} p_1 ... \mathrm{d} p_{n-1}
\end{aligned}
\end{equation}
might diverge for some non-zero test function $\phi$.
Nonetheless, in the study of differential equations a lot of insight is often gained by an initial broadening of the space of admissible solutions and in this spirit, one might even expect such formal solutions to be very important for the general study of the Wetterich equation.

One further remark is in order at this point: Upon solving equation \ref{eq:WetterichEquation} it is not clear whether there always exists a corresponding $S_\Lambda$ satisfying equation \ref{eq:PathIntegralRepresentation} amounting to the reconstruction problem \cite{src:ReuterSaueressig2019}.
Especially, a possible non-uniqueness of solutions to equation \ref{eq:WetterichEquation} casts doubts on a positive conjecture.
The situation is made even less clear by studying solutions to the $\Lambda$-free version of the Wetterich equation due to the difficulty of non-regularised path integrals.
\section{A Constructive Solution for the Correlation Functions}
\label{sec:AConstructiveSolution}
A full solution to the flow equations \ref{eq:WetterichExpandedWithLambdaBars} with $\kappa_m \neq 0$ for some $m \in \mathbb{N}_{\ge 3}$ of course seems rather difficult to find due to the non-linear structure of the $\bar{\lambda}_c$ terms.
This is the reason why one in practise usually truncates the equations at a finite $n \in \mathbb{N}$.
There are, however, precisely three terms on the right hand side of equation \ref{eq:WetterichExpandedWithLambdaBars} for $n \in \mathbb{N}_{\ge 3}$ revealing a somewhat linearish structure, namely
\begin{equation}
\begin{aligned}
c &= \left( n \right) &\Rightarrow \bar{\lambda}_c \text{ depends linearly on } \kappa_{n + 2} \\
c &= \left( n - 1, 1 \right) &\Rightarrow \bar{\lambda}_c \text{ depends linearly on } \kappa_{n + 1} \\
c& = \left( 1, n - 1 \right) &\Rightarrow \bar{\lambda}_c \text{ depends linearly on } \kappa_{n + 1}
\end{aligned}
\end{equation}
Phrased differently, for all $n \in \mathbb{N}_{\ge 3}$ there exist linear operators $I_n$ implicitly depending on $\left \{ \kappa_2, \bar{r} \right \}$ and $J_n$ implicitly depending on $\left \{ \kappa_2, \kappa_3, \bar{r} \right \}$ such that
\begin{equation}
\label{eq:CorrelatorFlowInKappanp2}
\begin{aligned}
I_n \kappa_{n+2} &= - 2 \left( 2 \pi \right)^d \partial_k \kappa_n + n J_n \kappa_{n+1}\\
& + \sum_{c \in \CombinationSet{n} \setminus \left \{ \left( n \right), \left( n - 1, 1 \right), \left( 1, n - 1 \right) \right \}} \left( -1 \right)^{\# c}  \frac{n!}{c!} \bar{\lambda}_c \, .
\end{aligned}
\end{equation}
The significance of this equation lies in the fact, that the right-hand side depends only on $\left \{ \kappa_2, ..., \kappa_{n + 1}, \bar{r} \right \}$.
Suppose now, that all $I_n$ possess right inverses $\rho_n$, i.e mappings such that $I_n \circ \rho_n = \mathrm{id}$.
Then, setting
\begin{equation}
\label{eq:Kappanp2FromRightInverse}
\begin{aligned}
\kappa_{n+2} &= \rho_n \Bigg[ - 2 \left( 2 \pi \right)^d \partial_k \kappa_n + n J_n \kappa_{n+1} \\
&\phantom{=} + \sum_{c \in \CombinationSet{n} \setminus \left \{ \left( n \right), \left( n - 1, 1 \right), \left( 1, n - 1 \right) \right \}} \left( -1 \right)^{\# c}  \frac{n!}{c!} \bar{\lambda}_c \Bigg]
\end{aligned}
\end{equation} 
will evidently solve equation \ref{eq:CorrelatorFlowInKappanp2}.
This fact suggests the following approach for solving the flow equation for the correlators:
\begin{enumerate}
\item For some $N \in \mathbb{N}$ find $\kappa_1, ..., \kappa_{N + 1}$ satisfying equation \ref{eq:WetterichExpandedWithLambdaBars} for all $n \in \mathbb{N}_{< N}$
\item Find a right inverse $\rho_N$ of $I_N$
\item Construct $\kappa_{N+2}$ as in equation \ref{eq:Kappanp2FromRightInverse}
\item Increase $N$ by $1$ and go back to step $2$
\end{enumerate}
This iterative construction will produce $\kappa_n$ for all $n \in \mathbb{N}$ and they will satisfy their respective flow equation.
Evidently, this construction depends crucially on the initial $\kappa_1, ..., \kappa_{N + 1}$ which have to be given as input for all values of momenta and the scale $k$.
This input may be pictured as a different kind of boundary condition to the Wetterich equation:
Instead of specifying the classical theory at $k \to \infty$ one specifies the full renormalisation group flow of a finite set of one-particle irreducible correlators which is exemplified in figure \ref{fig:idea}.
\begin{figure}
\includegraphics[scale=0.22]{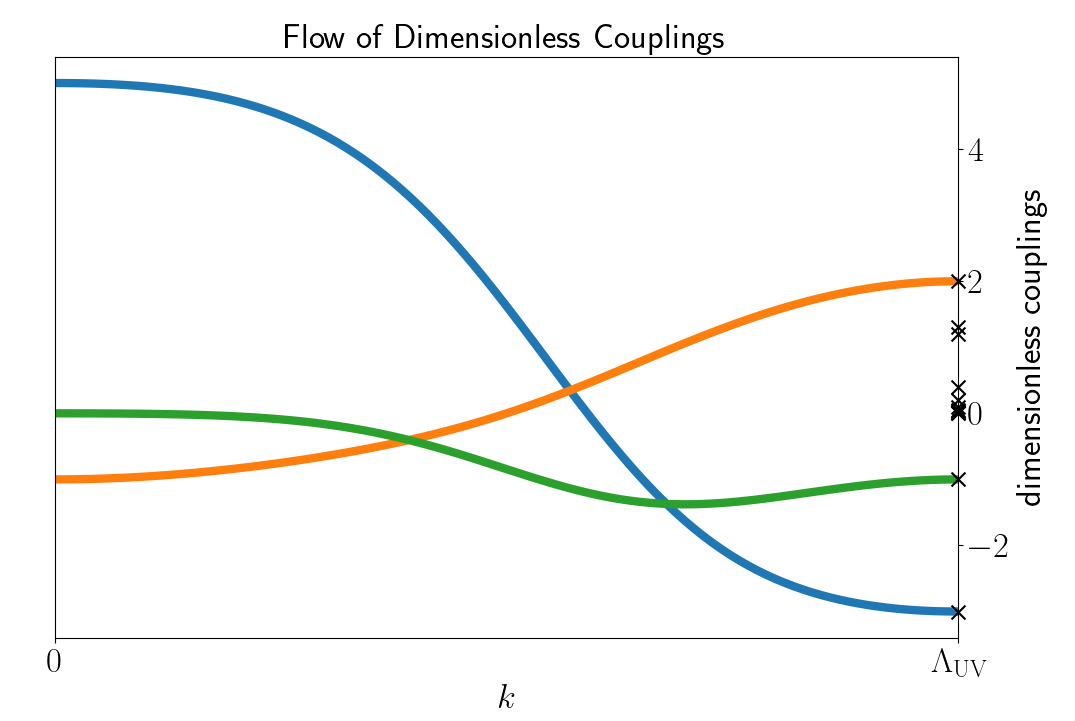}
\caption{A fictitious renormalisation group flow from a possibly infinite UV scale $\Lambda_{\mathrm{UV}}$ to $0$. In the presented approach initial conditions correspond to the exemplified flow of the three dimensionless couplings. In the traditional approach the initial conditions are given by the values of infinitely many couplings at $\Lambda_{\mathrm{UV}}$ exemplified by the crosses on the right axis.}
\label{fig:idea}
\end{figure}
A prototypical input might be given by a propagator $\kappa_2$ obtained through through some other well-developed method like a derivative expansion or even a numerical lattice computation.
In case of a $\mathbb{Z}_2$-symmetric theory such input in fact provides a full starting point for the presented scheme since $\kappa_1$ and $\kappa_3$ vanish.

As is clear from this discussion, there is a certain amount of choice involved.
Furthermore, in every iteration there may be several right inverses to choose from because the kernel of any $I_n$ might be non-empty.
Hence, this procedure is quite different from the usual approach of giving specific boundary conditions at some scale $k$ or at $k \to \infty$.
In fact, it shall be demonstrated that imposing the naive boundary condition of a real scalar $\phi^4$ theory in the UV limit $k \to \infty$ does not guarantee the uniqueness of solutions to equation \ref{eq:WetterichExpandedWithLambdaBars}.
However, before diving into the specifics of $\phi^4$ theory, we shall give explicit expressions for the $I_n$ and particularly simple choices of linear right inverses $\rho_n$.
For brevity, define
\begin{equation}
K \left( q \right) =
\frac{\left( \partial_k \bar{r} \right) \left( q \right)}{\left[ \kappa_2 \left( q \right) + \bar{r} \left( q \right) \right]^2} \, ,
\end{equation}
allowing to write
\begin{equation}
\begin{aligned}
&\lambda_{\left( n \right)} \left( p_1, ..., p_{n-1} \right)
=
\int_{\mathbb{R}^d} K \left( q \right) \\
&\times \kappa_{n+2} \left( p_1, ..., p_{n-1}, - \left[ p_1 + ... + p_{n-1} \right], q \right)
\mathrm{d} q \, .
\end{aligned}
\end{equation}
By the $\SymmetricGroupN{n+1}^*$ symmetry of $\kappa_{n+2}$, this is $\SymmetricGroupN{n-1}^*$ symmetric such that $\bar{\lambda}_{\left( n \right)} = \lambda_{\left( n \right)}$ and
\begin{equation}
\begin{aligned}
\bar{\lambda}_{\left( n \right)} &\left( p_1, ..., p_{n-1} \right) \\
&=
\int_{\mathbb{R}^d} K \left( q \right)
\kappa_{n+2} \left( p_1, ..., p_{n-1}, - q, q \right)
\mathrm{d} q \, .
\end{aligned}
\end{equation}
Thus, one may write
\begin{equation}
\begin{aligned}
\left( I_n f \right) ( p_1, &..., p_{n-1} )
= \\
&\int_{\mathbb{R}^d}
K \left( q \right) f \left( p_1, ..., p_{n-1}, -q, q \right)
\mathrm{d} q
\end{aligned}
\end{equation}
for all functions $f : \left( \mathbb{R}^d \right)^{n+1} \to \mathbb{R}$ where the integral exists. The reason for allowing arbitrary functions $f$ and not just $\SymmetricGroupN{n+1}^*$-symmetric ones is to facilitate the proof given in appendix \ref{apx:ProofRhonIsRightInverse} that the yet to be defined $\rho_n$ are indeed right inverses of the corresponding $I_n$.
An obvious choice of linear right inverse of the above $I_n$ is given by
\begin{equation}
\left( \bar{\rho}_n g \right) \left( p_1, ..., p_{n+1} \right)
=
\frac{g \left( p_1, ..., p_{n-1} \right)}{\int_{\mathbb{R}^d} K}
\end{equation}
However, in general such $\bar{\rho}_n g$ will not be $\SymmetricGroupN{n+1}^*$ symmetric whenever $g$ is $\SymmetricGroupN{n-1}^*$ symmetric which is unacceptable here, as it would generate terms that are not momentum conserving.
Taking $\bar{\rho}_n$ as an ansatz and successively eliminating all $\SymmetricGroupN{n+1}^*$-violating terms generated by the action of $\SymmetricGroupN{n+1}^*$ on functions of the form $\bar{\rho}_n g$ where $g$ is taken to be $\SymmetricGroupN{n-1}^*$ symmetric leads to the much better choice
\begin{equation}
\label{eq:rhonDefinition}
\begin{aligned}
\left( \rho_n g \right) &\left( p_1, ..., p_{n+1} \right) =
\sum_{J \subseteq \left \{ 0, ..., n + 1 \right \}}
\sum_{l = 0}^{\left \lfloor \frac{n - 1 - \# J}{2} \right \rfloor} \\
&\frac{\alpha^{n}_{\# J, l}}{\left( \int_{\mathbb{R}^d} K \right)^{n - \# J - l}} 
\int_{\left( \mathbb{R}^d \right)^{n - 1 - \# J - l}} \\
&g \left( p_J, -s_1, s_1, ..., -s_l, s_l, t_1, ..., t_{n - 1 - \# J -2l} \right) \\
&\times K \left( s_1 \right) ... K \left( s_l \right)
K \left( t_1 \right) ... K \left( t_{n - 1 - \# J - 2l} \right) \\
&\mathrm{d} s_{...} \mathrm{d} t_{...}
\end{aligned}
\end{equation}
with
\begin{equation}
\label{eq:alphanabDefinition}
\alpha^{n}_{a,b} = \frac{\left( - 1 \right)^{n - 1 - a - b}}{n} 2^{n - 1 - a - 2b} \binom{n - 1 - a - b}{b} \, .
\end{equation}
In the above expression we have defined $p_0 = - \left[ p_1 + ... + p_{n+1} \right]$ and introduced the shorthand notation $p_J := p_{J_1}, ..., p_{J_{\# J}}$.
Note that the particular order of the corresponding momenta $p$ in the above expression does not matter since $g$ is presumed symmetric.
Hence, we do not need another sum over all permutations of index sets $J$.
For a proof that $\rho_n$ when restricted to $\SymmetricGroupN{n-1}^*$-symmetric functions is indeed a right inverse of $I_n$ see appendix \ref{apx:ProofRhonIsRightInverse}.

It is obvious that $\rho_n$ is a linear operator and thus a particular simple choice of right inverse of $\kappa_n$.
Furthermore, it preserves $\mathcal{O} \left( d \right)$-invariance provided $K$ itself is $\mathcal{O} \left( d \right)$-invariant.
In our naive approach to $\phi^4$ theory, we shall consider a two-point function that does not scale with $k$ and approximates the free propagator
\begin{equation}
\kappa_{2,\text{free}} \left( p \right) = m^2 + \left \Vert p \right \Vert^2
\end{equation}
for some mass $m$.
Hence, any $k$ scaling of $K$ comes from the choice of a regulator.
Furthermore, common regulators scale like $k^2$ at small momenta leading to an overall $k$ scaling of $K$ as $k^{-3}$.
A simple power counting in equation \ref{eq:rhonDefinition} then reveals that $\rho_n$ scales like $k^{3-d}$.
This fact is remarkable, as it indicates that in $d > 3$ dimensions the correlators constructed through $\rho_n$ are strongly suppressed for large $k$.
This simplifies the control of the \enquote{classical limit} $k \to \infty$, as one usually considers only a finite set of non-zero correlation functions in this limit.
The small $k$ behaviour is precisely the opposite.
Here $\rho_n$ grows arbitrarily large, possibly leading to IR divergences.

As mentioned before, the choice of a right inverse is not necessarily unique which can be seen explicitly in the case of $n = 2$.
With the previous construction, we have
\begin{equation}
\begin{aligned}
&\left( \rho_2 g \right) \left( p, q, r \right)
\\
&= \frac{1}{2 \int_{\mathbb{R}^d} K} \left[ g\left( p \right) + g\left( q \right) + g\left( r \right) + g\left( - p - q - r \right) \right] \\
&\phantom{=} - \frac{1}{\left( \int_{\mathbb{R}^d} K \right)^2} \int_{\mathbb{R}^d} g \left( t \right) K \left( t \right) \mathrm{d} t \, ,
\end{aligned}
\end{equation}
which satisfies $\left( I_2 \circ \rho_2 \right) g = g$ whenever $g$ is $\SymmetricGroupN{1}^*$ symmetric.
However, there also exists a suitable non-linear right inverse $\rho_2'$ given by
\begin{equation}
\begin{aligned}
&\left( \rho_2' g \right) \left( p, q, r \right) = \frac{1}{8 \int_{\mathbb{R}^d} g K} \Big( \\
&\times \Big(
\left[ g\left( p \right) + g\left( q \right) + g\left( r \right) + g\left( - p - q - r \right) \right]^2 \\
&\;
- 2 \left[ g\left( p \right)^2 + g\left( q \right)^2 + g\left( r \right)^2 + g\left( - p - q - r \right)^2 \right]
\Big) \, .
\end{aligned}
\end{equation}
Hence, the operator $I_2$ indeed has a non-trivial kernel, since $I_2 \circ \left( \rho_2 - \rho_2' \right) = 0$.
Thus, there is a certain degree of freedom involved in the choice of a right inverse to $I_2$.
In particular this choice may be used to construct higher correlators that satisfy certain constraints such as boundary conditions (e.g at $k \to 0$ or $k \to \infty$) or decay properties like those produced in \cite{src:Pawlowski2007}.

\section{Solving the Flow Equations}
\label{sec:SolvingTheFlowEquations}
We shall consider a real scalar quantum field theory in $d$ Euclidean dimensions without spontaneous symmetry breaking with the \enquote{classical limit}
\begin{equation}
\label{eq:initialConditions}
\begin{aligned}
\lim_{k \to \infty} \kappa_2 \left( p \right) = \kappa_{2,\text{free}} \left( p \right) &= m^2 + \left \Vert p \right \Vert^2 \\
\lim_{k \to \infty} \kappa_4 \left( p, q, r \right) &= \frac{\lambda}{\left \vert m \right \vert^{d-4}} \\
\forall n \in \mathbb{N} \setminus \left \{ 2, 4 \right \} : \lim_{k \to \infty} \kappa_n &= 0
\end{aligned}
\end{equation}
for some $m \in \mathbb{R}$, $\lambda > 0$\footnote{The $\kappa_4$ limit has been chosen such that $\lambda$ is dimensionless.} where the limits should be understood in a distributional sense\footnote{Technically speaking, $\kappa_n$ is a distribution on $\mathbb{R}^{\left(n-1\right)d}$ and the $k$ limits should be understood as pointwise convergence.}.
In particular, for $k \to \infty$ all odd correlation functions vanish.
We shall now set $N = 3$ and proceed as outlined in the preceding section.
The reason for setting $N = 3$ is of course to be able to satisfy the boundary condition for $\kappa_{N+1} = \kappa_4$ for $k \to \infty$.
We thus choose the ansatz
\begin{equation}
\label{eq:Kappa4Definition}
\begin{aligned}
\kappa_4 &\left( p, q, r; k \right) = \frac{\lambda}{\left \vert m \right \vert^{d-4}} \exp \Bigg[ \\
& - \frac{\left \Vert p \right \Vert^d + \left \Vert q \right \Vert^d + \left \Vert r \right \Vert^d + \left \Vert p + q + r \right \Vert^d + \left \vert m \right \vert^d}{k \left \vert m \right \vert^{d-1}} \Bigg]
\end{aligned}
\end{equation}
which is obviously $\SymmetricGroupN{3}^\ast$ and $\mathcal{O} \left( d \right)$ invariant and satisfies equation \ref{eq:initialConditions}.
The rationale for choosing this particular form for $\kappa_4$ is to keep the upcoming integrals as simple as possible and to ensure a rapid decrease of $\kappa_4$ and its $k$ derivatives for $k \to 0$.
The latter is paramount for controlling the divergent $k$ behaviour of $\rho_n$ in this limit.
At the same time, all higher correlators as generated by the $\rho_n$ will vanish in the UV due to the very same $k$-scaling.
The most natural choice for the lower odd correlators is
\begin{equation}
\kappa_3 = 0 \qquad \text{and} \qquad \kappa_1 = 0 \, ,
\end{equation}
which alongside the given construction procedure guarantees the vanishing of all odd correlators because
\begin{itemize}
\item for all odd $n \in \mathbb{N}$ any $c \in \CombinationSet{n}$ contains an odd entry,
\item the chosen $\rho_n$ are linear.
\end{itemize}
This implements the standard $\mathbb{Z}_2$ symmetry such that only even correlators have to be dealt with.
Equation \ref{eq:Kappanp2FromRightInverse} then simplifies to
\begin{equation}
\label{eq:Kappa2np2FromRightInverse}
\begin{aligned}
\kappa_{2n+2} &= \rho_{2n} \Bigg[ - 2 \left( 2 \pi \right)^d \partial_k \kappa_{2n} \\
&\phantom{=} + \sum_{c \in \EvenCombinationSet{2n} \setminus \left \{ \left( 2n \right) \right \}} \left( -1 \right)^{\# c}  \frac{\left( 2 n \right)!}{c!} \bar{\lambda}_c \Bigg] \, ,
\end{aligned}
\end{equation} 
where $\EvenCombinationSet{n} \subset \CombinationSet{n}$ denotes the set of combinations with even entries.
The next step is now to find $\kappa_2$, since the flow equation for $\kappa_1$ is trivially satisfied.
Equation \ref{eq:WetterichExpandedWithLambdaBars} for $n = 2$ reads
\begin{equation}
\begin{aligned}
\partial_k \kappa_2 \left( p \right) &=
- \frac{1}{2 \left( 2 \pi \right)^d} \left( I_2 \kappa_4 \right) \left( p \right) \\
&=
- \frac{1}{2 \left( 2 \pi \right)^d} \int_{\mathbb{R}^d} K \left( q \right)
\kappa_{4} \left(p, - q, q \right)
\mathrm{d} q \, ,
\end{aligned}
\end{equation}
which in general cannot be expected to have a solution that can be put in closed form due to the dependence of $K$ on $\bar{r}$ and $\kappa_2$.
One may, however, show that the differential equation may be solved iteratively as is done in appendix \ref{apx:ProofKappa2IterationConverges}.
The initial ansatz is chosen to be the free propagator $\kappa_{2,\text{free}}$ and the regulator is chosen as
\begin{equation}
\label{eq:RegulatorDefinition}
\bar{r} \left( p; k \right) = \frac{\left \Vert p \right \Vert^2}{\exp \left[ \frac{\left \Vert p \right \Vert^2}{k^2} \right] - 1} \, ,
\end{equation}
both of which are $\mathcal{O} \left( d \right)$-invariant.
It is then demonstrated that whenever
\begin{equation}
\label{eq:lambdaBoundFromIteration}
0 \le \lambda < \left( \sqrt{3} - 1 \right) d \; 2^{d+1} \pi^{\frac{d}{2} - 1} \Gamma\left( \frac{d}{2} \right) \, ,
\end{equation}
a bounded, $\mathcal{O} \left( d \right)$-invariant and smooth (in its momentum argument as well as in $k$) solution satisfying the boundary condition \ref{eq:initialConditions} exists and is approached by the iterative scheme.
Note that the upper bound for $\lambda$ does not denote a critical coupling, it merely ensures that rather straightforward estimates may be applied.
We shall henceforth assume $\lambda$ to be bounded as in equation \ref{eq:lambdaBoundFromIteration}.
At the core of the proof lies the inequality
\begin{equation}
\label{eq:inversePropagatorEstimate}
\frac{1}{m^2 + \left \Vert p \right \Vert^2 + \bar{r} \left( p \right)}
\le
\frac{1}{m^2 + k^2} \, ,
\end{equation}
leading to the existence of a $\kappa_2 > m^2$ satisfying
\begin{equation}
\label{eq:Kappa2isAFixedPointOfIteration}
\begin{aligned}
\kappa_2 &\left( p; k \right) = \kappa_{2,\text{free}} \left( p; k \right) +
\frac{1}{2} \left( 2 \pi \right)^{-d}
\int_k^\infty
\int_{\mathbb{R}^d} \\
&\frac{\partial_{k'} r \left( q; k' \right)}{\left[ \kappa_2 \left( q; k' \right) + \bar{r} \left( q; k' \right) \right]^2}
\kappa_{4} \left(p, - q, q; k' \right)
\mathrm{d} q \,
\mathrm{d} k' \, .
\end{aligned}
\end{equation}
The iterative construction procedure of $\kappa_2$ also guarantees the existence of the IR limit $k \to 0$.
Note however, that in this limit $\kappa_2$ does not correspond to the free propagator.
Once we know $\kappa_2$, constructing the higher-order correlation functions is straightforward employing equation \ref{eq:Kappa2np2FromRightInverse}.
Their respective $\mathcal{O} \left( d \right)$-invariance follows from that of $K$.
It remains to discuss the behaviour of the correlators in the IR limit $k \to 0$ and the UV limit $k \to \infty$ respectively:
Obviously $\kappa_4$ vanishes in the limit of $k \to 0$.
As is proved in appendix \ref{apx:BigProof} for all $n \in \mathbb{N}_{\ge 2}$ there are constants $B_{2n}^{0,1} > 0$ such that
\begin{equation}
\label{eq:Kappa2nSupBound}
\left \Vert \kappa_{2n} \right \Vert_{L^\infty}
\le
B_{2n}^{0,1}
\frac{\left \vert m \right \vert^{2 + \left(2 - d \right) \left( n - 1 \right) + \left( n - 2 \right) \left( 1 + \Delta \right)} k}{\left( k + \left \vert m \right \vert \right)^{\left( n - 2 \right) \left( 1 + \Delta \right) + 1}}
\end{equation}
for
\begin{equation}
\label{eq:DeltaDefinition}
\Delta = \begin{cases}
1 & d \ge 4 \\
d - 3 & d < 4
\end{cases} \, .
\end{equation}
These equations establish the central result of this work:
For $d > 2$ all higher correlators vanish in both limits $k \to 0$ and $k \to \infty$.
Thus, the IR limit is a non-interacting theory with a non-trivially momentum dependent propagator $\kappa_2$ - a generalised free theory.
It may also be possible that the given solutions generalise to $d = 2$, since the proofs only make use of the property that the UV behaviour of $\left \vert \partial_k^l \kappa_4 \right \vert$ is bounded by $\sim k^{-l}$.
It is, however, even bounded by $\sim k^{-l-1}$ whenever $l \in \mathbb{N}$ which should guarantee the correct UV limits, while equation \ref{eq:Kappa2nSupBound} still ensures trivial IR limits.
A formal argument showing this has not yet been produced.

In the definition of $\kappa_4$ in equation \ref{eq:Kappa4Definition}, note that the argument in the exponential can be multiplied by any positive real number and still all estimates hold analogously with modified constants.
Furthermore, the boundary conditions at $k \to \infty$ remain satisfied and all higher correlators vanish at $k = 0$ upon such a modification of $\kappa_4$.
At the same time, the IR limit of $\kappa_2$ will in general be different.
Such ansatzes do not correspond to a rescaling of $k$ since the $k$ dependence of $\bar{r}$ remains unaltered.
Instead, they lead to different flows solving the flow equations for the correlators.
\section{The Flow of the Dimensionless Potential}
It is possible to extract the quantum potential from the correlators by examining their behaviour at zero momentum.
Of particular interest is the flow of the dimensionless potential $v$ given by
\begin{equation}
v \left( s \right):= \sum_{n = 1}^\infty \frac{\kappa_{2n} \left( 0, ..., 0 \right)}{k^{2 + \left(2-d\right)\left(n-1 \right)}} \frac{s^{2n}}{\left( 2 n \right)!} \, .
\end{equation}
It is appropriate to analyse its dimensionless flow, i.e $k \partial_k v$ which we shall examine in the limits $k \to 0$ and $k \to \infty$.
The $\kappa_2$ contribution is determined by equation \ref{eq:Kappa2isAFixedPointOfIteration} where the second term on the right-hand side is non-negative for all $p \in \mathbb{R}^d$.
Hence,
\begin{equation}
\lim_{k \to 0} \frac{\kappa_2 \left( 0 \right)}{k^2} \ge \lim_{k \to 0} \frac{\kappa_{2,\text{free}} \left( 0 \right)}{k^2} = \infty
\end{equation}
so that the resulting two-point correlator contains a gap that is bounded from below by the bare gap.
Furthermore,
\begin{equation}
\begin{aligned}
&\lim_{k \to 0} k \partial_k \frac{\kappa_2 \left( 0 \right)}{k^2}
\le
\lim_{k \to 0} \left[ \frac{\left \Vert \partial_k \kappa_2 \right \Vert_{L^\infty}}{k} - 2 \frac{\kappa_{2,\text{free}} \left( 0 \right)}{k^2} \right] \\
&\le
\lim_{k \to 0} \left[ \frac{\left( 2 \pi \right)^{-d}}{2} R_1 A_4^0 \frac{\left \vert m \right \vert^3 k}{\left( k^2 + m^2 \right)^2} - 2 \frac{\kappa_{2,\text{free}} \left( 0 \right)}{k^2} \right] \\
&= - \infty
\end{aligned}
\end{equation}
for constants $R_1, A_4^0 \ge 0$, where the $\left \Vert \partial_k \kappa_2 \right \Vert_{L^\infty}$ estimate is taken from equation \ref{eq:KDerivativeKappa2Estimate} in appendix \ref{apx:BigProof}.
Thus, the contribution of the propagator to the dimensionless potential diverges in the limit of $k \to 0$ which may be expected, since $m$ is taken to not scale with $k$.
The UV limits become
\begin{equation}
\lim_{k \to \infty} \frac{\kappa_2 \left( 0 \right)}{k^2} = \lim_{k \to \infty} \frac{m^2}{k^2} = 0
\end{equation}
and
\begin{equation}
\begin{aligned}
\lim_{k \to \infty} &k \partial_k \frac{\kappa_2 \left( 0 \right)}{k^2}
\le
\lim_{k \to \infty} \left[ \frac{\left \Vert \partial_k \kappa_2 \right \Vert_{L^\infty}}{k} - 2 \frac{m^2}{k^2} \right] \\
&\le
\lim_{k \to \infty} \left[ \frac{\left( 2 \pi \right)^{-d}}{2} R_1 A_4^0 \frac{\left \vert m \right \vert^3 k}{\left( k^2 + m^2 \right)^2} - 2 \frac{m^2}{k^2} \right] \\
&= 0 \, .
\end{aligned}
\end{equation}
Thus in the limit of $k \to \infty$ the corresponding contribution to $v$ vanishes and the solution lives in the deep-Euclidean region.
For the contributions from the higher correlators, we use theorem \ref{thm:Kappa2nDerivativesEstimate} from appendix \ref{apx:BigProof} to produce the estimates
\begin{align}
\left \Vert \kappa_{2n} \right \Vert_{L^\infty}
&\le
B_{2n}^{0,x}
\frac{\left \vert m \right \vert^{2 + \left(2 - d \right) \left( n - 1 \right) + \left( n - 2 \right) \left( 1 + \Delta \right)} k^x}{\left( k + \left \vert m \right \vert \right)^{\left( n - 2 \right) \left( 1 + \Delta \right) + x}} \, , \\
\left \Vert \partial_k \kappa_{2n} \right \Vert_{L^\infty}
&\le
B_{2n}^{1,x}
\frac{\left \vert m \right \vert^{2 + \left(2 - d \right) \left( n - 1 \right) + \left( n - 2 \right) \left( 1 + \Delta \right)} k^x}{\left( k + \left \vert m \right \vert \right)^{\left( n - 2 \right) \left( 1 + \Delta \right) + 1 + x}}
\end{align}
with constants $B_{2n}^{0,x}, B_{2n}^{1,x} \ge 0$ for all $x \in \mathbb{N}$ and $n \in \mathbb{N}_{\ge 2}$.
Hence, for all such $n$,
\begin{equation}
\begin{aligned}
\Big \vert k \partial_k &\frac{\kappa_{2n} \left( 0, ..., 0 \right)}{k^{2 + \left(2-d\right)\left( n - 1 \right)}} \Big \vert
\le \left \vert \frac{\partial_k \kappa_{2n} \left( 0, ..., 0 \right)}{k^{1 + \left(2-d\right)\left( n - 1 \right)}} \right \vert \\
& + \left \vert \frac{2 + \left(2-d\right)\left( n - 1 \right)}{k^{2 + \left(2-d\right)\left( n - 1 \right)}} \kappa_{2n} \left( 0, ..., 0 \right) \right \vert \, .
\end{aligned}
\end{equation}
With the previous inequalities, we then obtain
\begin{equation}
\begin{aligned}
&\lim_{k \to 0} \left \vert \frac{\kappa_{2n} \left( 0, ..., 0 \right)}{k^{2 + \left(2-d\right)\left( n - 1 \right)}} \right \vert \le 
B_{2n}^{0, \max \left \{ 1, 3 + \left( 2 - d \right) \left( n - 1 \right) \right \}} \\
&\phantom{\le} \times
\left( \frac{\left \vert m \right \vert}{k} \right)^{2 + \left( 2 - d \right) \left( n - 1 \right) - \max \left \{ 1, 3 + \left( 2 - d \right) \left( n - 1 \right) \right \}} 
= 0 \, .
\end{aligned}
\end{equation}
Likewise
\begin{equation}
\begin{aligned}
&\lim_{k \to 0} \left \vert \frac{\partial_k \kappa_{2n} \left( 0, ..., 0 \right)}{k^{1 + \left(2-d\right)\left( n - 1 \right)}} \right \vert \le 
B_{2n}^{1, \max \left \{ 1, 2 + \left( 2 - d \right) \left( n - 1 \right) \right \}} \\
&\phantom{\le} \times
\left( \frac{\left \vert m \right \vert}{k} \right)^{1 + \left( 2 - d \right) \left( n - 1 \right) - \max \left \{ 1, 2 + \left( 2 - d \right) \left( n - 1 \right) \right \}} 
= 0 \, ,
\end{aligned}
\end{equation}
so that $v$ and $k \partial_k v$ in the limit of small $k$ are fully determined by the $\kappa_2$ contributions.
For large $k$ the estimates
\begin{align}
\left \vert \frac{\kappa_{2n} \left( 0, ..., 0 \right)}{k^{2 + \left(2-d\right)\left( n - 1 \right)}} \right \vert &\le 
B_{2n}^{0, 1}
\left( \frac{\left \vert m \right \vert}{k} \right)^{\left( 3 - d + \Delta \right) \left ( n - 2 \right) + 4 - d} \\
\left \vert \frac{\partial_k \kappa_{2n} \left( 0, ..., 0 \right)}{k^{1 + \left(2-d\right)\left( n - 1 \right)}} \right \vert &\le 
B_{2n}^{1, 1}
\left( \frac{\left \vert m \right \vert}{k} \right)^{\left( 3 - d + \Delta \right) \left ( n - 2 \right) + 4 - d}
\end{align}
produce meaningful bounds whenever $d \le 4$:
\begin{align}
\lim_{k \to \infty} \left \vert \frac{\kappa_{2n} \left( 0, ..., 0 \right)}{k^{2 + \left(2-d\right)\left( n - 1 \right)}} \right \vert &\le \begin{cases}
0 & d \le 4 \\
B_{2n}^{0,1} & d = 4 \\
\infty & \text{otherwise} \, ,
\end{cases} \\
\left \vert \frac{\partial_k \kappa_{2n} \left( 0, ..., 0 \right)}{k^{1 + \left(2-d\right)\left( n - 1 \right)}} \right \vert &\le \begin{cases}
0 & d \le 4 \\
B_{2n}^{1,1} & d = 4 \\
\infty & \text{otherwise} \, .
\end{cases}
\end{align}
Thus,
\begin{equation}
\lim_{k \to \infty} \left \vert v \left( s \right) \right \vert \le \begin{cases}
0 & d < 4 \\
\sum_{n = 2}^\infty B_{2n}^{0,1} \frac{s^{2n}}{\left( 2n \right)!} & d = 4
\end{cases}
\end{equation}
and
\begin{equation}
\lim_{k \to \infty} \left \vert k \partial_k v \left( s \right) \right \vert \le \begin{cases}
0 & d < 4 \\
\sum_{n = 2}^\infty Y_{2n} \frac{s^{2n}}{\left( 2n \right)!} & d = 4 \, ,
\end{cases}
\end{equation}
for $Y_{2n} = B_{2n}^{1,1} + \left \vert 2 + \left( 2 - d \right) \left( n - 1 \right) \right \vert B_{2n}^{0,1}$.
In particular no definite statement is obtained by these methods for $d > 4$.
However, the $\kappa_4$ contribution to $v$ may be calculated explicitly:
\begin{equation}
\begin{aligned}
k \partial_k &\frac{\kappa_4 \left( 0, 0, 0 \right)}{k^{4-d}} = \lambda \exp \left[ - \frac{ \left \vert m \right \vert}{k} \right] \\
& \times \left( \left( 4 - d \right) \left( \frac{\left \vert m \right \vert}{k} \right)^{4 - d}
+  \left( \frac{\left \vert m \right \vert}{k} \right)^{5 - d}
\right) \, .
\end{aligned}
\end{equation}
Hence,
\begin{equation}
\lim_{k \to \infty} k \partial_k \frac{\kappa_4 \left( 0, 0, 0 \right)}{k^{4-d}}
=
\begin{cases}
0 & d \le 4 \\
-\infty & \text{otherwise} \, .
\end{cases}
\end{equation}
so that for $d = 4$ the beta function of the quartic term of the dimensionless potential vanishes in the limit of $k \to \infty$.

In $d < 4$ we see that the dimensionless potential as well as its flow vanishes in the large $k$ limit owing to the fact that $\kappa_2$ and $\kappa_4$ are bounded and have positive mass dimensions.
Hence, in comparison with the Wilson-Fisher fixed point the constructed solution features a completely different phenomenology.

In the case of $d = 4$ the dimensionless potential is obviously non-vanishing while the fate of its flow in the limit of $k \to \infty$ is unclear.
A numerical analysis showed that the coefficients $Y_{2n}$ with $d = 4$ grow so fast that a zero radius of convergence is probable.
Thus, we do not obtain a useful estimate of $\lim_{k \to \infty} \left \vert k \partial_k v \left( s \right) \right \vert$ in this case.
\section{Possible Applications}
The iterative scheme presented in section \ref{sec:AConstructiveSolution} provides a systematic approach to producing exact solutions to the expanded Wetterich equation.
It may be straightforwardly adapted to fermionic fields as well as to models with multiple fields with few modifications.
As such the approach is extremely general and can be applied in many situations.
Obvious candidates are theories in which approximation schemes have produced a set of one-particle irreducible correlation functions such as propagators or flows of lower-order vertices.
Such approximations may e.g have been produced by expansion schemes or lattice computations and are not limited to analytic input but can just as well be numerical.
In the case where such quantities have only been calculated at $k = 0$ without a regulator, a renormalisation group flow may be imposed through a suitable interpolation between the given result and some initial conditions along with a regulator.
The approach is then to construct operators $\rho_n$ that are compatible with boundary or regularity conditions and study features such as relevance and irrelevance of resulting higher order-operators along the renormalisation group flow.
Likewise, proposed flows of lower-order correlators may be scrutinised and possibly dismissed if the flow of the higher-order correlators proves to be singular or fails to have correct asymptotics.

This constitutes a new bootstrap strategy to explore exact properties of the theory space using the functional RG.
\section{Conclusions}
It has been demonstrated that a Euclidean invariant exact solution to equation \ref{eq:WetterichExpandedWithLambdaBars} satisfying the boundary conditions \ref{eq:initialConditions} exists and may be constructed as outlined in section \ref{sec:SolvingTheFlowEquations}.
Furthermore, explicit bounds on the flow as given by equation \ref{eq:Kappa2nSupBound} and more generally by the methods applied in appendices \ref{apx:ProofKappa2IterationConverges} and \ref{apx:BigProof} may be utilized to approximate the flow of any given correlation function to arbitrary precision.
By construction, the mass and the quartic coupling only undergo finite renormalisations during the flow from $k \to \infty$ to $k \to 0$.
Thus, the theory in the latter limit does not correspond to the known $\phi^4_3$ result which requires infinite renormalisations.
This raises the question of how to determine the physically correct boundary conditions in the large $k$ limit which is of course intimately connected with the physically appropriate choice of classical action $S_\Lambda$.
Conversely, one may ask how a given renormalisation group flow determines $S_\Lambda$ which precisely amounts to the reconstruction problem\cite{src:Manrique2009}.
With $S_\Lambda$ being unknown in this case, it is unclear whether $\lim_{k \to 0} \Gamma_k$ is independent of the choice of renormalisation scheme.
In particular it was demonstrated that the flow was not uniquely determined by $\lim_{k \to \infty} \Gamma_k$.
Hence, it may be expected that there is a yet to be uncovered connection between exact solutions to the flow equations and a possibly unique physical one.

The given solution was obtained through a very straightforward construction procedure that essentially enables the extrapolation of higher-order correlation functions from a set of lower-order ones.
Though these extrapolations should not be expected to be unique, one may hope that their asymptotic behaviour for small and large values of $k$ are strongly constrained.
Such constraints can then reveal lots of structure of the higher correlators.
In particular, the construction principle may be extended to models with multiple scalar fields as well as fermions without gauge symmetries.
Applying similar choices of $\rho_n$ operators to systems truncated at finite $n \in \mathbb{N}$ may then give hints for or against the applicability of the truncations in use and possibly even enable the explicit calculation of uncertainties.
\acknowledgments
I wish to thank Holger Gies for fruitful discussions about this paper as well as Abdol Sabor Salek for providing the induction hypothesis stated in equation \ref{eq:PropagatorNthDerivative}.

% FIXME: funded?
This work has been funded by the Deutsche Forschungsgemeinschaft (DFG) under Grant Nos. 398579334 (Gi328/9-1) and 406116891 within the Research Training Group RTG 2522/1.
\appendix
\section{Proof of the Derivative Identity for the Propagator}
\label{apx:ProofPropagatorDerivative}
Before stepping into the induction proof, note that equation \ref{eq:PropagatorNthDerivative} corresponds to equation \ref{eq:PropagatorFirstDerivative} for $n = 1$.
In order to further shorten notation, let us write
\begin{equation}
\left \langle c \right \rangle
=
\left \langle c_1, ..., c_l \right \rangle
=
A \circ \prod_{l = 1}^{\# c} \left( \Differential^{c_l} \Gamma_k^{(2)} \circ A \right)
\end{equation}
for all $l \in \mathbb{N}$ and any multi-index $c \in \mathbb{N}^l$.
Furthermore, define two operations on such multi-indices.
First, let
\begin{equation}
\begin{aligned}
s_j &: \mathbb{N}^l \to \mathbb{N}^l \\
\left( n_1, ..., n_l \right) &\mapsto \left( n_1, ..., n_{j-1}, 1 + n_j, n_{j+1}, ..., n_l \right)
\end{aligned}
\end{equation}
for any $j \in \mathbb{N}$ with $j \le l$ and second,
\begin{equation}
\begin{aligned}
t_j &: \mathbb{N}^l \to \mathbb{N}^{l+1} \\
\left( n_1, ..., n_l \right) &\mapsto \left( n_1, ..., n_{j-1}, 1, n_j, ..., n_l \right)
\end{aligned}
\end{equation}
for any $j \in \mathbb{N}$ with $j \le l + 1$.
Assuming the validity of equation \ref{eq:PropagatorNthDerivative} for a fixed $n \in \mathbb{N}$, we obtain
\begin{equation}
\begin{aligned}
\Differential^{n+1} A &=
\sum_{c \in \CombinationSet{n}} \left( -1 \right)^{1 + \# c} \frac{n!}{c!}
\sum_{j=1}^{1 + \# c}
\left \langle t_j \left( c \right) \right \rangle \\
&\phantom{=} + \sum_{c \in \CombinationSet{n}} \left( -1 \right)^{\# c} \frac{n!}{c!}
\sum_{j=1}^{\# c}
\left \langle s_j \left( c \right) \right \rangle \, .
\end{aligned}
\end{equation}
It is apparent that $s_j$ and $t_j$ are both injective maps from $\CombinationSet{n}$ to $\CombinationSet{n+1}$ for all possible $j$.
Thus, we may equally well sum over $\CombinationSet{n+1}$ instead of $\CombinationSet{n}$ giving
\begin{equation}
\begin{aligned}
\Differential^{n+1} A &=
\sum_{c \in \CombinationSet{n + 1}} \left( -1 \right)^{\# c} \frac{n!}{c!}
\sum_{\substack{j = 1 \\ c_j = 1}}^{\# c}
\left \langle c \right \rangle \\
&\phantom{=} + \sum_{c \in \CombinationSet{n + 1}} \left( -1 \right)^{\# c} \frac{n!}{c!}
\sum_{\substack{j = 1 \\ c_j \neq 1}}^{\# c}
c_j \left \langle c \right \rangle \, ,
\end{aligned}
\end{equation}
where it is now obvious that
\begin{equation}
\begin{aligned}
\Differential^{n+1} A &=
\sum_{c \in \CombinationSet{n + 1}} \left( -1 \right)^{\# c} \frac{n!}{c!}
\sum_{j = 1}^{\# c}
c_j \left \langle c \right \rangle \\
&=
\sum_{c \in \CombinationSet{n + 1}} \left( -1 \right)^{\# c} \frac{\left(n + 1 \right)!}{c!} \left \langle c \right \rangle \, .
\end{aligned}
\end{equation}
This proves equation \ref{eq:PropagatorNthDerivative}.
\section{Proof that \texorpdfstring{$I_n \circ \rho_n = \mathrm{id}$}{In after rhon = id}}
\label{apx:ProofRhonIsRightInverse}
Let us fix a real $\SymmetricGroupN{n-1}^*$-symmetric function $g$ on $\left( \mathbb{R}^d \right)^{n-1}$ and compute $I_n \rho_n g$.
In order to facilitate the proof, let us split $\rho_n g$ into the following parts defined by restricting the sum over $J$ in equation \ref{eq:rhonDefinition}:
\begin{itemize}
\item $\rho_n^1 g$ where $J$ contains no index $\ge n$
\item $\rho_n^2 g$ where $J$ contains precisely one index $\ge n$
\item $\rho_n^3 g$ where $J$ contains precisely two indices $\ge n$
\end{itemize}
Then, $I_n \rho_n g = I_n \rho_n^1 g + I_n \rho_n^2 g + I_n \rho_n^3 g$ by the linearity of $I_n$\footnote{This splitting makes sense, because $I_n$ is also defined for non-$\SymmetricGroupN{n+1}$-symmetric functions.}.
Hence, it suffices to analyse the three parts individually:
The first part becomes
\begin{equation}
\begin{aligned}
\big( I_n \rho_n^1 &g \big) \left( p_1, ..., p_{n-1} \right) =
\sum_{J \subseteq \left \{ 0, ..., n - 1 \right \}}
\sum_{l = 0}^{\left \lfloor \frac{n - 1 - \# J}{2} \right \rfloor} \\
&\frac{\alpha^{n}_{\# J, l}}{\left( \int_{\mathbb{R}^d} K \right)^{n - \# J - l}} 
\int_{\left( \mathbb{R}^d \right)^{n - 1 - \# J - l}}
\int_{\mathbb{R}^d} K \left( q \right)\\
&g \left( p_J, -s_1, s_1, ..., -s_l, s_l, t_1, ..., t_{n - 1 - \# J -2l} \right) \\
&K \left( s_1 \right) ... K \left( s_l \right)
K \left( t_1 \right) ... K \left( t_{n - 1 - \# J - 2l} \right) \\
&\mathrm{d} q \, \mathrm{d} s_{...} \mathrm{d} t_{...} \, ,
\end{aligned}
\end{equation}
where $p_J$ contains neither $q$ nor $-q$ allowing the evaluation of the $q$ integral.
Thus,
\begin{equation}
\label{eq:InRhonPart1}
\begin{aligned}
\big( I_n \rho_n^1 &g \big) \left( p_1, ..., p_{n-1} \right) =
\sum_{J \subseteq \left \{ 0, ..., n - 1 \right \}}
\sum_{l = 0}^{\left \lfloor \frac{n - 1 - \# J}{2} \right \rfloor} \\
&\frac{\alpha^{n}_{\# J, l}}{\left( \int_{\mathbb{R}^d} K \right)^{n - 1 - \# J - l}} 
\int_{\left( \mathbb{R}^d \right)^{n - 1 - \# J - l}} \\
&g \left( p_J, -s_1, s_1, ..., -s_l, s_l, t_1, ..., t_{n - 1 - \# J -2l} \right) \\
&K \left( s_1 \right) ... K \left( s_l \right)
K \left( t_1 \right) ... K \left( t_{n - 1 - \# J - 2l} \right) \\
&\mathrm{d} s_{...} \mathrm{d} t_{...} \, .
\end{aligned}
\end{equation}
In the second part $p_J$ contains either $q$ or $-q$.
But since $K \left( q \right) = K \left( -q \right)$ by equation \ref{eq:RegulatorIsSym1AstSymmetric} both contributions are identical.
Removing the index $n$ or $n+1$ respectively from $J$ and inserting $q$ explicitly then leads to
\begin{equation}
\begin{aligned}
\big( I_n &\rho_n^2 g \big) \left( p_1, ..., p_{n-1} \right) =
2 \sum_{J \subseteq \left \{ 0, ..., n - 1 \right \}}
\sum_{l = 0}^{\left \lfloor \frac{n - 2 - \# J}{2} \right \rfloor} \\
&\frac{\alpha^{n}_{\# J + 1, l}}{\left( \int_{\mathbb{R}^d} K \right)^{n - 1 - \# J - l}} 
\int_{\left( \mathbb{R}^d \right)^{n - 2 - \# J - l}}
\int_{\mathbb{R}^d} K \left( q \right) \\
&g \left( p_J, q, -s_1, s_1, ..., -s_l, s_l, t_1, ..., t_{n - 2 - \# J -2l} \right) \\
&K \left( s_1 \right) ... K \left( s_l \right)
K \left( t_1 \right) ... K \left( t_{n - 2 - \# J - 2l} \right) \\
&\mathrm{d} q \, \mathrm{d} s_{...} \mathrm{d} t_{...} \, ,
\end{aligned}
\end{equation}
where the factor of $2$ comes from the two possibilities of picking either $n$ or $n+1$.
Relabelling $q$ to $t_{n-1-\# J - 2l}$ simplifies this part to
\begin{equation}
\label{eq:InRhonPart2}
\begin{aligned}
\big( I_n \rho_n^2 &g \big) \left( p_1, ..., p_{n-1} \right) =
\sum_{J \subseteq \left \{ 0, ..., n - 1 \right \}}
\sum_{l = 0}^{\left \lfloor \frac{n - 2 - \# J}{2} \right \rfloor} \\
&\frac{2 \alpha^{n}_{\# J + 1, l}}{\left( \int_{\mathbb{R}^d} K \right)^{n - 1 - \# J - l}} 
\int_{\left( \mathbb{R}^d \right)^{n - 1 - \# J - l}} \\
&g \left( p_J, -s_1, s_1, ..., -s_l, s_l, t_1, ..., t_{n - 1 - \# J -2l} \right) \\
&K \left( s_1 \right) ... K \left( s_l \right)
K \left( t_1 \right) ... K \left( t_{n - 1 - \# J - 2l} \right) \\
&\mathrm{d} s_{...} \mathrm{d} t_{...} \, ,
\end{aligned}
\end{equation}
where the similarity to equation \ref{eq:InRhonPart1} is immediate.
In the third part $J$ contains both $n$ and $n+1$ corresponding to $p_J$ containing both $q$ and $-q$.
Removing these indices from $J$, one obtains
\begin{equation}
\begin{aligned}
\big( &I_n \rho_n^3 g \big) \left( p_1, ..., p_{n-1} \right) =
\sum_{J \subseteq \left \{ 0, ..., n - 1 \right \}}
\sum_{l = 0}^{\left \lfloor \frac{n - 3 - \# J}{2} \right \rfloor} \\
&\frac{\alpha^{n}_{\# J + 2, l}}{\left( \int_{\mathbb{R}^d} K \right)^{n - 2 - \# J - l}} 
\int_{\left( \mathbb{R}^d \right)^{n - 3 - \# J - l}}
\int_{\mathbb{R}^d} K \left( q \right)\\
&g \left( p_J, -q, q, -s_1, s_1, ..., -s_l, s_l, t_1, ..., t_{n - 3 - \# J -2l} \right) \\
&K \left( s_1 \right) ... K \left( s_l \right)
K \left( t_1 \right) ... K \left( t_{n - 3 - \# J - 2l} \right) \\
&\mathrm{d} q \, \mathrm{d} s_{...} \mathrm{d} t_{...} \, .
\end{aligned}
\end{equation}
Relabelling $q$ to $s_{l+1}$ and shifting the index $l$ by $1$ leads to
\begin{equation}
\label{eq:InRhonPart3}
\begin{aligned}
\big( I_n \rho_n^3 &g \big) \left( p_1, ..., p_{n-1} \right) =
\sum_{J \subseteq \left \{ 0, ..., n - 1 \right \}}
\sum_{l = 1}^{\left \lfloor \frac{n - 1 - \# J}{2} \right \rfloor} \\
&\frac{\alpha^{n}_{\# J + 2, l - 1}}{\left( \int_{\mathbb{R}^d} K \right)^{n - 1 - \# J - l}} 
\int_{\left( \mathbb{R}^d \right)^{n - 1 - \# J - l}} \\
&g \left( p_J, -s_1, s_1, ..., -s_l, s_l, t_1, ..., t_{n - 1 - \# J - 2l} \right) \\
&K \left( s_1 \right) ... K \left( s_l \right)
K \left( t_1 \right) ... K \left( t_{n - 1 - \# J - 2l} \right) \\
&\mathrm{d} s_{...} \mathrm{d} t_{...} \, .
\end{aligned}
\end{equation}
It is now straightforward to add up the parts in equations \ref{eq:InRhonPart1}, \ref{eq:InRhonPart2} and \ref{eq:InRhonPart3}.
Furthermore, the coefficients $\alpha^n_{a,b}$ may be determined by demanding $I_{n} \rho_{n} g = g$ translating to
\begin{equation}
\begin{aligned}
&\qquad n \, \alpha^{n}_{n-1,0} = 1 \\
&\forall a \in \left \{ 0, ..., n - 4 \right \}, b \in \left \{ 1, ..., \left \lfloor \frac{n - 2 - a}{2} \right \rfloor \right \} : \\ &\qquad \alpha^{n}_{a,b} + 2 \alpha^{n}_{a+1, b} + \alpha^{n}_{a+2,b-1} = 0 \\
&\forall a \in \left \{ 0, ..., n - 3 \right \}, n - a \text{ odd} : \\
&\qquad \alpha^n_{a,\left( n-1-a \right) / 2} + \alpha^n_{a+2,\left( n-3-a \right) / 2} = 0 \\
&\forall a \in \left \{ 0, ..., n - 2 \right \} : \\
&\qquad \alpha^{n}_{a,0} + 2 \alpha^{n}_{a+1,0} = 0 \, .
\end{aligned}
\end{equation}
Here the first factor of $n$ comes from the $n$ different subsets of $\left \{ 0, ..., n-1 \right \}$ of length $n-1$.
All these subsets give the same contribution to $I_n \rho_n g$ due to the $\SymmetricGroupN{n-1}^*$ symmetry of $g$.
As may easily be verified, equation \ref{eq:alphanabDefinition} solves these recursion relations.
Furthermore, this solution is unique because all $\alpha^n_{a,b}$ for $n \in \mathbb{N}$ and $a, b \in \mathbb{N}_0$ with $a + 2b \le n - 1$ are uniquely determined by the values of $\alpha^n_{n-1,0}$.
\section{Existence Proof of \texorpdfstring{$\kappa_2$}{kappa2}}
\label{apx:ProofKappa2IterationConverges}
Let $\kappa_2^1 \left( p; k \right) = m^2 + \left \Vert p \right \Vert^2$ and for any $n \in \mathbb{N}$ define
\begin{equation}
\begin{aligned}
\label{eq:kappa2Iteration}
\kappa_2^{n+1} &\left( p; k \right) =
\kappa_2^1 \left( p \right) +
\frac{1}{2 \left( 2 \pi \right)^d}
\int_k^\infty \int_{\mathbb{R}^d} \\
&\frac{\partial_{k'} r \left( q, k' \right)}{\left[ \kappa_2^n \left( q; k' \right) + \bar{r} \left( q, k' \right) \right]^2}
\kappa_{4} \left(p, - q, q; k' \right)
\mathrm{d} q \, \mathrm{d} k'
\end{aligned}
\end{equation}
which obviously satisfies the boundary condition \ref{eq:initialConditions} if the integrals are finite.
The first thing to note is that $\kappa_2^n \ge \kappa_2^1$ for all $n \in \mathbb{N}$, since $\kappa_2^1 > 0$, $\kappa_4 \ge 0$ and by equation \ref{eq:RegulatorDefinition} the regulator contribution is positive.
Thus, by equation \ref{eq:inversePropagatorEstimate}
\begin{equation}
\frac{1}{\left[ \kappa_2^n \left( q \right) + \bar{r} \left( q \right) \right]^2}
\le
\frac{1}{\left[ m^2 + k^2 \right]^2}
\end{equation}
which together with
\begin{equation}
\partial_k \bar{r} \left( q \right)
=
\frac{\left \Vert q \right \Vert^4}{k^3 \left( \cosh \left[ \frac{\left \Vert q \right \Vert^2}{k^2} \right] - 1 \right)} \\
\le
2 k
\end{equation}
leads to
\begin{equation}
\frac{\partial_k \bar{r} \left( q \right)}{\left[ \kappa_2^n \left( q \right) + \bar{r} \left( q \right) \right]^2}
\le
\frac{2 k}{\left[ m^2 + k^2 \right]^2} \, .
\end{equation}
Inserting this into the recursion relation \ref{eq:kappa2Iteration} leads to
\begin{equation}
\begin{aligned}
\kappa_2^n \left( p \right) &\le
\kappa_2^1 \left( p \right) + \frac{\left( 2 \pi \right)^{-d} \lambda}{\left \vert m \right \vert^{d-4}}
\int_k^\infty  \int_{\mathbb{R}^d}
\frac{k'}{\left[ m^2 + k'^2 \right]^2} \\
&\phantom{\le}
\times
\exp \left[ - \frac{2 \left \Vert p \right \Vert^d + 2 \left \Vert q \right \Vert^d + \left \vert m \right \vert^d}{k' \left \vert m \right \vert^{d-1}} \right]
\mathrm{d} q \, \mathrm{d} k' \\
&=
\kappa_2^1 \left( p \right) + \left( 2 \pi \right)^{-d} \frac{s_{d-1}}{2 d} \lambda \left \vert m \right \vert^3
\int_k^\infty \\
&\phantom{=} \times
\frac{k'^2}{\left[ m^2 + k'^2 \right]^2}
\exp \left[ - \frac{2 \left \Vert p \right \Vert^d + \left \vert m \right \vert^d}{k' \left \vert m \right \vert^{d-1}} \right]
\mathrm{d} k' \, .
\end{aligned}
\end{equation}
where $s_n$ denotes the surface area of the unit $n$-sphere.
Estimating the exponential by $1$ and extending the integral to $\left[ 0, \infty \right)$ immediately gives the result
\begin{equation}
\begin{aligned}
\kappa_2^n \left( p \right) &\le
\kappa_2^1 \left( p \right) + \left( 2 \pi \right)^{-d} \frac{s_{d-1}}{2 d} \lambda \left \vert m \right \vert^3 \\
&\phantom{\le}
\times \int_0^\infty
\frac{k'^2}{\left[ m^2 + k'^2 \right]^2}
\mathrm{d} k' \\
&\le
\kappa_2^1 \left( p \right) + \left( 2 \pi \right)^{-d} \pi \frac{s_{d-1}}{8 d} \lambda m^2
\end{aligned}
\end{equation}
which in a slightly more compact form reads
\begin{equation}
\left \Vert \kappa_2^n - \kappa_2^1 \right \Vert_{L^\infty} \le \frac{ \left( 2 \pi \right)^{-d} \pi s_{d-1}}{8 d} \lambda m^2
:=
t_d \lambda m^2
\end{equation}
for all $n \in \mathbb{N}$.
Note that the numerical factor $t_d$ in front of $\lambda m^2$ is rather small: It is $1/8$ for $d = 1$ and goes to zero rather rapidly for larger values of $d$.

We shall now show that the mapping $\kappa_2^n \mapsto \kappa_2^{n+1}$ given by equation \ref{eq:kappa2Iteration} actually is a contraction for values of $\lambda$ not being too large.
To this end, note that
\begin{equation}
\begin{aligned}
\frac{\kappa_2^n \left( q \right) + \bar{r} \left( q \right)}{\kappa_2^1 \left( q \right) + \bar{r} \left( q \right)} &=
\frac{\kappa_2^1 \left( q \right) + \bar{r} \left( q \right)}{\kappa_2^1 \left( q \right) + \bar{r} \left( q \right)} +
\frac{\kappa_2^n \left( q \right) - \kappa_2^1 \left( q \right)}{\kappa_2^1 \left( q \right) + \bar{r} \left( q \right)} \\
&\le
1 + t_d \lambda \frac{m^2}{\kappa_2^1 \left( q \right) + \bar{r} \left( q \right)} \\
&\le
1 + t_d \lambda
\end{aligned}
\end{equation}
and hence
\begin{equation}
\begin{aligned}
&\left \vert
\left[ \kappa_2^{n+1} \left( q \right) + \bar{r} \left( q \right) \right]^{-2}
-
\left[ \kappa_2^n \left( q \right) + \bar{r} \left( q \right) \right]^{-2}
\right \vert \\
&=
\frac{\left[ 2 \bar{r} \left( q \right) + \kappa_2^n \left( q \right) + \kappa_2^{n+1} \left( q \right) \right] \left \vert \kappa_2^n \left( p \right) - \kappa_2^{n+1} \left( q \right) \right \vert}{\left[ \kappa_2^{n+1} \left( q \right) + \bar{r} \left( q \right) \right]^2 \left[ \kappa_2^n \left( q \right) + \bar{r} \left( q \right) \right]^2 } \\
&\le
\frac{\left[ 2 \bar{r} \left( q \right) + \kappa_2^n \left( q \right) + \kappa_2^{n+1} \left( q \right) \right] \left \vert \kappa_2^n \left( p \right) - \kappa_2^{n+1} \left( q \right) \right \vert}{\left[ \kappa_2^1 \left( q \right) + \bar{r} \left( q \right) \right]^4} \\
&\le
2 \left( 1 + t_d \lambda \right)
\frac{\left \vert \kappa_2^n \left( q \right) - \kappa_2^{n+1} \left( q \right) \right \vert}{\left[ \kappa_2^1 \left( q \right) + \bar{r} \left( q \right) \right]^3} \\
&\le
2 \frac{1 + t_d \lambda}{m^2}
\frac{\left \Vert \kappa_2^n - \kappa_2^{n+1} \right \Vert_{L^\infty}}{\left[ \kappa_2^1 \left( q \right) + \bar{r} \left( q \right) \right]^2} \, .
\end{aligned}
\end{equation}
Using this estimate to compare two successive iterates one finally arrives at
\begin{equation}
\begin{aligned}
\big \vert &\kappa_2^{n+2} \left( p \right) - \kappa_2^{n+1} \left( p \right) \big \vert \\
&\le
2 \frac{1 + t_d \lambda}{m^2}
\frac{\left \Vert \kappa_2^n - \kappa_2^{n+1} \right \Vert_{L^\infty}}{2 \left( 2 \pi \right)^d}
\int_k^\infty
\int_{\mathbb{R}^d} \\
&\phantom{\le} \times
\frac{\partial_{k'} r \left( q; k' \right)}{\left[ \kappa_2^1 \left( q \right) + \bar{r} \left( q \right) \right]^2}
\kappa_{4} \left(p, - q, q; k' \right)
\mathrm{d} q \, 
\mathrm{d} k' \\
&\le
2 \frac{1 + t_d \lambda}{m^2}
\left \Vert \kappa_2^n - \kappa_2^{n+1} \right \Vert_{L^\infty}
\left \Vert \kappa_2^2 - \kappa_2^1 \right \Vert_{L^\infty} \\
&\le
2 \left( 1 + t_d \lambda \right) t_d \lambda
\left \Vert \kappa_2^{n+1} - \kappa_2^n \right \Vert_{L^\infty} \, ,
\end{aligned}
\end{equation}
or for short
\begin{equation}
\begin{aligned}
\big \Vert \kappa_2^{n+2} - &\kappa_2^{n+1} \big \Vert_{L^\infty}
\le \\
&2 \left( 1 + t_d \lambda \right) t_d \lambda
\left \Vert \kappa_2^{n+1} - \kappa_2^n \right \Vert_{L^\infty} \, .
\end{aligned}
\end{equation}
The factor in front is smaller than one whenever
\begin{equation}
\label{eq:lambdaBoundImplicit}
0 \le \lambda < \frac{\sqrt{3} - 1}{2 t_d} \, ,
\end{equation}
or equivalently equation \ref{eq:lambdaBoundFromIteration} is satisfied.
The upper bound is a function that grows rather rapidly starting at a value of approximately $2.93$ for $d=1$.
From now on, we assume $\lambda$ to satisfy inequality \ref{eq:lambdaBoundImplicit}.
Thus, by the completeness of $L^\infty \left( \mathbb{R}^d \right)$ we have proven the convergence of the sequence $\left( p \mapsto \kappa_2^n \left( p; k \right) \right)_{n \in \mathbb{N}}$ to some $p \mapsto \kappa_2 \left( p; k \right)$ in $L^\infty \left( \mathbb{R}^d \right)$ for all $k \in \left[ 0, \infty \right)$.
Also, $\kappa_2$ has to be a fixed point of the iteration map such that equation \ref{eq:Kappa2isAFixedPointOfIteration} is satisfied where the right hand side is continuous with respect to $k$, since the integrand is non-singular for all $k' \ge 0$.
Thus, $\kappa_2$ is also $k$-continuous on $\left[ 0, \infty \right)$ as well.
But then the right-hand side is differentiable with respect to $k$ on all of $\left[ 0, \infty \right)$, so that
\begin{equation}
\begin{aligned}
\partial_k &\kappa_2 \left( p \right) =
- \frac{1}{2} \left( 2 \pi \right)^{-d} \\
&\times
\int_{\mathbb{R}^d} \frac{\partial_k r \left( q \right)}{\left[ \kappa_2 \left( q \right) + r \left( q \right) \right]^2}
\kappa_{4} \left(p, - q, q \right)
\mathrm{d} q
\end{aligned}
\end{equation}
for all $k \in \mathbb{R}_{\ge 0}$.
Hence, $\kappa_2$ satisfies the flow equation.
Furthermore, the right-hand side is obviously $k$-differentiable so that $\partial_k^2 \kappa_2$ may be expressed through $\kappa_2$ and $\partial_k \kappa_2$.
Hence, $\partial_k^2 \kappa_2$ is again $k$-differentiable.
Iterating this argument then shows that $\kappa_2$ is smooth with respect to $k$.
The $p$-smoothness of $\kappa_2$ is immediate from equation \ref{eq:Kappa2isAFixedPointOfIteration} by the regularity of $\kappa_4$.
For the $\mathcal{O} \left( d \right)$-invariance of $\kappa_2$, note that $\kappa_4$ and $\bar{r}$ as well as $\kappa_2^1$ are $\mathcal{O} \left( d \right)$-invariant.
Thus, by equation \ref{eq:kappa2Iteration} each iterate $\kappa_2^n$ is also $\mathcal{O} \left( d \right)$-invariant.
Since the set of all $\mathcal{O} \left( d \right)$-invariant functions in $L^\infty \left( \mathbb{R}^d \right)$ is closed, the limit point $\kappa_2$ has to lie in this set as well.
\section{Bounding the Higher Correlators}
\label{apx:BigProof}
Let us assume that all higher correlators have been constructed by virtue of equation \ref{eq:Kappa2np2FromRightInverse}.
It then remains to find useful bounds ascertaining the correct UV limits as well as non-singular IR limits.
The key to this, is a proper estimate for the $k$-derivatives of $\kappa_2$.
Before we can produce such estimates, we shall need corresponding ones for $\kappa_4$ and $\bar{r}$.
Let us begin with the regulator for which we have the relation
\begin{equation}
\partial_k \bar{r} \left( q \right)
=
\frac{2}{k^3} \bar{r} \left( q \right) \left[ \left \Vert q \right \Vert^2 + \bar{r} \left( q \right) \right]
\end{equation}
that may easily be derived from equation \ref{eq:RegulatorDefinition}.
It hints at the following identity for all $l \in \mathbb{N}_0$ and some constants $\beta^l_{a,b} \in \mathbb{R}$:
\begin{equation}
\begin{aligned}
\label{eq:RegulatorKDerivative}
\partial_k^l &\bar{r} \left( q \right)
= \\
&\sum_{a = 1}^{l+1} \sum_{b = 0}^{l + 1 - a} \beta^l_{a,b} k^{2-l-2a-2b} \bar{r} \left( q \right)^a \left[ \left \Vert q \right \Vert^2 + \bar{r} \left( q \right) \right]^b
\end{aligned}
\end{equation}
which can straightforwardly be proved by induction.
The constants $\beta^l_{a,b}$ are recursively defined by
\begin{equation}
\begin{aligned}
\beta^{l+1}_{a,b} &= \left( 2 - l - 2a -2b \right) \beta^l_{a,b} \\
&\phantom{=} + 2 a \beta^l_{a,b-1} + 2 b \beta^l_{a-1,b} \\
\beta^0_{a,b} &= \begin{cases}
1 & a = 1, b = 0 \, , \\
0 & \text{otherwise}
\end{cases}
\end{aligned}
\end{equation}
for all $l \in \mathbb{N}_0$ and $a, b \in \mathbb{Z}$.
The next theorem will allow to find an estimate for such expressions.

\begin{theorem}
\label{thm:RegulatorPowersEstimate}
Let $a \in \mathbb{N}$ and $b \in \mathbb{N}_0$.
Then,
\begin{equation}
\sup_{q \in \mathbb{R}^d} \left \vert \bar{r} \left( q \right)^a \left[ \left \Vert q \right \Vert^2 + \bar{r} \left( q \right) \right]^b \right \vert \le k^{2 \left( a + b \right)} \left( 1 + \frac{b}{a} \right)^b \, .
\end{equation}
\end{theorem}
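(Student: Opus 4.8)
The plan is to collapse the supremum over $q \in \mathbb{R}^d$ to a one–dimensional optimisation via the substitution $x = \left \Vert q \right \Vert^2 / k^2$. Since $\bar{r} \left( q \right) = k^2 x / \left( e^x - 1 \right)$ and $\left \Vert q \right \Vert^2 + \bar{r} \left( q \right) = k^2 x\, e^x / \left( e^x - 1 \right)$, both factors are strictly positive so the absolute value is superfluous, and writing $n = a + b$ one finds
\begin{equation}
\bar{r} \left( q \right)^a \left[ \left \Vert q \right \Vert^2 + \bar{r} \left( q \right) \right]^b = k^{2 \left( a + b \right)} \frac{x^n e^{bx}}{\left( e^x - 1 \right)^n} \, .
\end{equation}
After pulling out the prefactor $k^{2 \left( a + b \right)}$ it therefore remains to prove $\sup_{x \ge 0} f \left( x \right) \le \left( n / a \right)^b$ for $f \left( x \right) = x^n e^{bx} / \left( e^x - 1 \right)^n$, the right–hand side being exactly $\left( 1 + b/a \right)^b$.

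Next I would study $f$ through its logarithmic derivative. A short computation gives $\tfrac{d}{dx} \log f = n\, g \left( x \right) - a$ with $g \left( x \right) = \tfrac{1}{x} - \tfrac{1}{e^x - 1}$. A preliminary lemma I would establish is that $g$ decreases strictly from $g \left( 0^+ \right) = \tfrac12$ to $\lim_{x \to \infty} g \left( x \right) = 0$; this follows from $g' \left( x \right) < 0$, equivalently $x < 2 \sinh \left( x/2 \right)$. Combined with $f \left( 0^+ \right) = 1$ and $f \left( x \right) \to 0$ as $x \to \infty$, it pins down the entire critical structure of $f$ and reduces everything to locating a single maximiser.

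I would then split into two cases. If $a \ge b$ then $a/n \ge \tfrac12 > g \left( x \right)$, so $\log f$ is strictly decreasing and $\sup f = f \left( 0^+ \right) = 1 \le \left( n/a \right)^b$, the last step holding because $n/a \ge 1$. If $a < b$ there is a unique critical point $x^\ast$ solving $g \left( x^\ast \right) = a/n$, which is the global maximum; rearranging this relation gives $e^{x^\ast} = \left( n + b x^\ast \right) / \left( n - a x^\ast \right)$ with $0 < x^\ast < n/a$, and substituting back produces the closed form
\begin{equation}
f \left( x^\ast \right) = \frac{\left( n - a x^\ast \right)^a \left( n + b x^\ast \right)^b}{n^n} \, .
\end{equation}

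The main difficulty is that $x^\ast$ is defined only transcendentally, and I would circumvent it by bounding $f \left( x^\ast \right)$ through the purely algebraic envelope $P \left( y \right) = \left( n - a y \right)^a \left( n + b y \right)^b$ maximised over $y \in \left[ 0, n/a \right)$. Its unique interior maximiser is $y_\ast = n \left( b - a \right) / \left( a b \right)$, where $n - a y_\ast = na/b$ and $n + b y_\ast = nb/a$, so $P \left( y_\ast \right) / n^n = \left( b/a \right)^{b-a}$ and hence $f \left( x^\ast \right) \le \left( b/a \right)^{b-a}$. The proof then closes with the elementary estimate $\left( b/a \right)^{b-a} \le \left( n/a \right)^b$, which upon setting $t = b/a \ge 1$ reduces to $\left( t - 1 \right) \log t \le t \log \left( 1 + t \right)$; this holds because $t - 1 \le t$ and $\log t \le \log \left( 1 + t \right)$ with both logarithms non–negative.
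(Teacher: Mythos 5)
Your proposal is correct and follows essentially the same route as the paper: the substitution $y=\left\Vert q\right\Vert^2/k^2$, the critical-point equation $\exp y = 1+\left(a+b\right)y/\left(a+b-ay\right)$ (your $g\left(x^\ast\right)=a/n$ is the same condition), and the closed form $\left(n-ax^\ast\right)^a\left(n+bx^\ast\right)^b/n^n$ at the maximiser all coincide with the paper's argument. The only divergence is the last step, where the paper simply bounds the two factors separately via $x^\ast<n/a$ (giving $\le 1\cdot\left(1+b/a\right)^b$ in one line), while you optimise the algebraic envelope to obtain the sharper constant $\left(b/a\right)^{b-a}$ and then relax it — more work for the same stated bound, though your explicit monotonicity lemma for $g$ and the case split $a\ge b$ versus $a<b$ make the location of the supremum somewhat more airtight than the paper's treatment.
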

\begin{proof}
For $b = 0$ the statement is obvious since $\bar{r} \left( q \right) \le k^2$.
Hence, let us assume that $b \in \mathbb{N}$.
Since $\bar{r} \left( q \right)^a \left[ \left \Vert q \right \Vert^2 + \bar{r} \left( q \right) \right]^b$ is actually a smooth function of $\left \Vert q \right \Vert^2$, we may look for local extrema by differentiating with respect to $\left \Vert q \right \Vert^2$.
Then, a necessary condition for $\left \Vert q \right \Vert^2$ at a maximum is
\begin{equation}
\begin{aligned}
a \left[ \left \Vert q \right \Vert^2 + \bar{r} \left( q \right) \right] &\partial_{\left \Vert q \right \Vert^2} \bar{r} \left( q \right) \\
&+
b \bar{r} \left( q \right) \left[ 1 + \partial_{\left \Vert q \right \Vert^2} \bar{r} \left( q \right) \right]
= 0 \, .
\end{aligned}
\end{equation}
Now, note that the exponential regulator also admits the following simple identity for $q \neq 0$:
\begin{equation}
\label{eq:q2DerivativeOfRegulator}
\partial_{\left \Vert q \right \Vert^2} \bar{r} \left( q \right) = \frac{\bar{r} \left( q \right)}{\left \Vert q \right \Vert^2} \left[ 1 - \frac{1}{k^2} \left( \left \Vert q \right \Vert^2 + \bar{r} \left( q \right) \right) \right]
\end{equation}
which inserted into the previous equation gives us the equivalent condition
\begin{equation}
\label{eq:RegulatorPowersExtremePointConditionIntermediate}
1 = \frac{\bar{r} \left( q \right)}{k^2} + \frac{a}{a+b} \frac{\left \Vert q \right \Vert^2}{k^2}
\end{equation}
after some simple algebra.
We perform a change of variables to $y = \frac{\left \Vert q \right \Vert^2}{k^2}$ and obtain
\begin{equation}
\label{eq:RegulatorPowersExtremePointCondition}
\exp y = 1 + \frac{\left( a + b \right) y}{a + b - a y}
\end{equation}
as a further equivalent expression for the extremality even including the case $q = 0$.
Note, that the apparently excluded case $a y = a + b$ is not relevant, since it does not solve the derivative test as is obvious from equation \ref{eq:RegulatorPowersExtremePointConditionIntermediate}.
Furthermore,
\begin{equation}
\frac{\partial}{\partial y} \left( 1 + \frac{\left( a + b \right) y}{a + b - a y} \right) = \left( \frac{a + b}{a + b - a y} \right)^2 > 0 \, ,
\end{equation}
such that for $a y > a + b$ we have the right-hand side of equation \ref{eq:RegulatorPowersExtremePointCondition} being monotonically increasing with $y$ and
\begin{equation}
\lim_{y \to \infty} \left( 1 + \frac{\left( a + b \right) y}{a + b - a y} \right)
=
1 - \frac{a+b}{a}
= - \frac{b}{a}
< 0 \, ,
\end{equation}
spoiling equation \ref{eq:RegulatorPowersExtremePointCondition}.
Thus, we conclude that all extrema lie in the interval $y \in \left[ 0, \frac{a+b}{a} \right)$.
But then, at a maximum we have
\begin{equation}
\begin{aligned}
\bar{r} &\left( q \right)^a \left[ \left \Vert q \right \Vert^2 + \bar{r} \left( q \right) \right]^b \\
&= \left( \frac{\left \Vert q \right \Vert^2}{\exp y - 1} \right)^a \left( \left \Vert q \right \Vert^2 + \frac{\left \Vert q \right \Vert^2}{\exp y - 1} \right)^b \\
&= k^{2 \left( a + b \right)} \left( \frac{y}{\exp y - 1} \right)^a \left( y + \frac{y}{\exp y - 1} \right)^b \\
&= k^{2 \left( a + b \right)} \left( \frac{y}{\frac{\left( a + b \right) y}{a + b - a y}} \right)^a \left( y + \frac{y}{\frac{\left( a + b \right) y}{a + b - a y}} \right)^b \\
&= k^{2 \left( a + b \right)} \left( 1 - \frac{a}{a+b} y \right)^a \left( 1 + \frac{b}{a+b} y \right)^b \\
&\le
k^{2 \left( a + b \right)} \left( 1 + \frac{b}{a} \right)^b \, ,
\end{aligned}
\end{equation}
where we have used $y \in \left[ 0, \frac{a+b}{a} \right)$ in the last estimate.
\end{proof}
\begin{corollary}
Applying this estimate to the regulator derivatives, we obtain
\begin{equation}
\label{eq:RegulatorDerivativeEstimate}
\left \Vert \partial_k^n \bar{r} \right \Vert_{L^\infty}
\le
k^{2 - n} \sum_{a = 1}^n \sum_{b=0}^{n + 1 - a}
\left \vert \beta^n_{a, b} \right \vert \left( 1 + \frac{b}{a} \right)^b \, .
\end{equation}
Hence, there is a constant $R_n \ge 0$ such that
\begin{equation}
\left \Vert \partial_k^n \bar{r} \right \Vert_{L^\infty}
\le R_n k^{2 - n}
\end{equation}
for all $n \in \mathbb{N}_0$.
\end{corollary}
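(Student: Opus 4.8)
The plan is to estimate $\partial_k^n \bar{r}$ summand by summand, starting from its closed form \ref{eq:RegulatorKDerivative}. First I would take the $L^\infty$ norm of that expansion and apply the triangle inequality, so that $\left \Vert \partial_k^n \bar{r} \right \Vert_{L^\infty}$ is bounded by the double sum over $a$ and $b$ of $\left \vert \beta^n_{a,b} \right \vert\, k^{2-n-2a-2b} \sup_{q \in \mathbb{R}^d} \left \vert \bar{r} \left( q \right)^a \left[ \left \Vert q \right \Vert^2 + \bar{r} \left( q \right) \right]^b \right \vert$, where the powers of $k$ may be pulled out of each supremum since they are $q$-independent. Every summand has $a \in \mathbb{N}$ and $b \in \mathbb{N}_0$, so Theorem \ref{thm:RegulatorPowersEstimate} applies to each of them without exception.

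The decisive step is then to feed each supremum into Theorem \ref{thm:RegulatorPowersEstimate}, replacing $\sup_q \left \vert \bar{r}^a \left[ \left \Vert q \right \Vert^2 + \bar{r} \right]^b \right \vert$ by $k^{2(a+b)} \left( 1 + \frac{b}{a} \right)^b$. The crucial observation is that the accompanying power of $k$ collapses uniformly: $k^{2-n-2a-2b} \cdot k^{2(a+b)} = k^{2-n}$ for every admissible pair $(a,b)$. Hence a single common factor $k^{2-n}$ factors out of the entire finite double sum, giving
\begin{equation}
\left \Vert \partial_k^n \bar{r} \right \Vert_{L^\infty}
\le
k^{2-n} \sum_{a = 1}^{n+1} \sum_{b=0}^{n + 1 - a} \left \vert \beta^n_{a, b} \right \vert \left( 1 + \frac{b}{a} \right)^b \, ,
\end{equation}
with the summation ranges inherited verbatim from \ref{eq:RegulatorKDerivative}. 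Since the remaining double sum is finite and its summands are non-negative constants independent of $k$, I would simply define $R_n$ to be its value; then $R_n \ge 0$ and $\left \Vert \partial_k^n \bar{r} \right \Vert_{L^\infty} \le R_n k^{2-n}$ for all $k > 0$.

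I do not expect a genuine analytic obstacle here: the whole argument is a direct term-by-term application of the already established Theorem \ref{thm:RegulatorPowersEstimate}, and the only point demanding a little care is reconciling the summation range with the stated bound, which truncates $a$ at $n$ rather than $n+1$. The extremal term $a = n+1$ forces $b = 0$, and from the recursion for $\beta^n_{a,b}$ one checks that $\beta^{l+1}_{a,0} = \left( 2 - l - 2a \right) \beta^l_{a,0}$, so that every $b = 0$ coefficient with $a \ge 2$ vanishes for all orders (intuitively, each $k$-derivative of $\bar{r}$ necessarily introduces a factor $\left \Vert q \right \Vert^2 + \bar{r}$). Consequently the $a = n+1$ summand is absent for $n \ge 1$ and the sum may be truncated at $a = n$ exactly as stated, while the remaining case $n = 0$ reduces to the elementary bound $\bar{r} \left( q \right) \le k^2$ with $R_0 = 1$.
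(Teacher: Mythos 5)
Your proof is correct and follows the route the paper intends for this corollary, which it states without an explicit argument: a term-by-term application of Theorem \ref{thm:RegulatorPowersEstimate} to the expansion \ref{eq:RegulatorKDerivative}, with the powers of $k$ collapsing to the common factor $k^{2-n}$ and $R_n$ defined as the resulting finite sum. Your extra observation that $\beta^{n}_{a,0}=0$ for $a\ge 2$ (indeed for all $a$ once $n\ge 1$), which justifies truncating the $a$-sum at $n$ rather than $n+1$, together with the separate elementary treatment of $n=0$ via $\bar{r}\left(q\right)\le k^2$, correctly fills in the details the paper leaves implicit.
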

\begin{corollary}
\label{cor:KSupEstimate}
Applying the estimate to $K$ and employing equation \ref{eq:inversePropagatorEstimate} leads to
\begin{equation}
\left \Vert K \right \Vert_{L^\infty} \le R_1 \frac{k}{\left( k^2 + m^2 \right)^2} \, .
\end{equation}
\end{corollary}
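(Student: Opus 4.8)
The plan is to bound the numerator and the denominator of $K\left(q\right) = \left(\partial_k \bar{r}\right)\left(q\right) \big/ \left[\kappa_2\left(q\right) + \bar{r}\left(q\right)\right]^2$ separately and then combine the two estimates. First I would control the numerator by invoking the preceding corollary with $n = 1$, which gives $\left\Vert \partial_k \bar{r} \right\Vert_{L^\infty} \le R_1 k^{2-1} = R_1 k$; this is the sole source of the factor $R_1 k$ appearing in the claim.

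Next I would bound the denominator from below uniformly in $q$. The essential input is the pointwise lower bound $\kappa_2\left(q\right) \ge \kappa_{2,\text{free}}\left(q\right) = m^2 + \left\Vert q \right\Vert^2$, which follows from the existence proof in appendix \ref{apx:ProofKappa2IterationConverges}: there every iterate satisfies $\kappa_2^n \ge \kappa_2^1 = \kappa_{2,\text{free}}$ (because $\kappa_2^1 > 0$, $\kappa_4 \ge 0$ and the regulator contribution is positive), and hence so does the limit $\kappa_2$. Adding $\bar{r}\left(q\right)$ and applying equation \ref{eq:inversePropagatorEstimate} then yields $\kappa_2\left(q\right) + \bar{r}\left(q\right) \ge m^2 + \left\Vert q \right\Vert^2 + \bar{r}\left(q\right) \ge m^2 + k^2$, so that the squared denominator is bounded below by $\left(k^2 + m^2\right)^2$ independently of $q$.

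Finally I would combine the two bounds: for every $q \in \mathbb{R}^d$ one has $\left\vert K\left(q\right)\right\vert \le \left\Vert \partial_k \bar{r}\right\Vert_{L^\infty} \big/ \left(k^2 + m^2\right)^2 \le R_1 k \big/ \left(k^2 + m^2\right)^2$, and taking the supremum over $q$ gives the assertion. I do not expect a genuine obstacle here, since the statement is an immediate consequence of two already-established estimates; the only point requiring care is that one must employ the full free-propagator lower bound $\kappa_2 \ge \kappa_{2,\text{free}}$ rather than merely $\kappa_2 > m^2$, as this is precisely the form in which the hypothesis of equation \ref{eq:inversePropagatorEstimate} is phrased.
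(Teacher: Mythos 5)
Your proposal is correct and follows the same route the paper intends: the numerator is bounded by the preceding corollary with $n=1$, giving $\left\Vert \partial_k \bar{r} \right\Vert_{L^\infty} \le R_1 k$, and the denominator is bounded below via $\kappa_2 \ge \kappa_{2,\text{free}}$ (from the iteration in appendix \ref{apx:ProofKappa2IterationConverges}) together with equation \ref{eq:inversePropagatorEstimate}. Your closing remark is also well taken — the bound $\kappa_2 > m^2$ alone would not suffice since $\bar{r}\left(q\right) \to 0$ for large $\left\Vert q \right\Vert$, so the $\left\Vert q \right\Vert^2$ term in the free propagator is genuinely needed.
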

Having obtained the appropriate estimates for the regulator, the next step towards the $\kappa_2$ estimates is to study $\kappa_4$.
\begin{theorem}
\label{thm:Kappa4SupIntEstimate}
For all $l \in \mathbb{N}_0$ there exist constants $A_4^l \ge 0$ such that
\begin{equation}
\sup_{p \in \mathbb{R}^d} \int_{\mathbb{R}^d} \left \vert
\partial_k^l \kappa_4 \left( p, q, -q \right)
\right \vert
\mathrm{d} q
\le
A_4^l \frac{\left \vert m \right \vert^3 k}{k^l + \left \vert m \right \vert^l} \, .
\end{equation}
\end{theorem}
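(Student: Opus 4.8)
The plan is to exploit the explicit form of $\kappa_4$ in equation \ref{eq:Kappa4Definition}. Setting $r=-q$ and using $\Vert -q\Vert^d=\Vert q\Vert^d$, the fourth argument becomes $\Vert p+q-q\Vert^d=\Vert p\Vert^d$, so that $\kappa_4(p,q,-q;k)=\frac{\lambda}{|m|^{d-4}}e^{-g/k}$ with $g=g(p,q)=(2\Vert p\Vert^d+2\Vert q\Vert^d+|m|^d)/|m|^{d-1}$. The decisive structural feature is that $g$ splits additively as $g=A(p)+B(q)+|m|$ with $A(p)=2\Vert p\Vert^d/|m|^{d-1}\ge 0$ and $B(q)=2\Vert q\Vert^d/|m|^{d-1}\ge 0$, decoupling the $p$- and $q$-dependence inside the exponential.

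First I would record that, for $l\ge 1$, repeated differentiation preserves the shape of the integrand: a one-line induction gives $\partial_k^l e^{-g/k}=\big(\sum_{j=1}^{l}c_{l,j}\,g^j k^{-l-j}\big)e^{-g/k}$ for constants $c_{l,j}$ independent of $k$ and $q$, since $\partial_k\big(g^j k^{-l-j}e^{-g/k}\big)$ again produces only terms of this form with $l$ raised by one. Because $g\ge 0$, the triangle inequality reduces the claim to bounding $\sum_{j}|c_{l,j}|k^{-l-j}\sup_p\int_{\mathbb{R}^d}g^j e^{-g/k}\,\mathrm{d}q$, the case $l=0$ being simply the single $j=0$ term.

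Next I would evaluate the $q$-integral. Writing $g=\alpha+B(q)$ with $\alpha:=A(p)+|m|\ge|m|$ and expanding $(\alpha+B)^j$ binomially leaves only the moments $\int_{\mathbb{R}^d}B(q)^i e^{-B(q)/k}\,\mathrm{d}q$. Passing to spherical coordinates and substituting $x=\Vert q\Vert^d$ turns each into a Gamma integral equal to $\frac{s_{d-1}\,i!\,|m|^{d-1}}{2d}k^{i+1}$. Resumming the binomial series gives $\int_{\mathbb{R}^d}g^j e^{-g/k}\,\mathrm{d}q=\frac{s_{d-1}|m|^{d-1}}{2d}\,j!\,k^{j+1}\,e^{-\alpha/k}\sum_{i=0}^{j}(\alpha/k)^i/i!$, so the entire $p$-dependence sits in $\alpha$. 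The key point is that $w\mapsto e^{-w}\sum_{i=0}^{j}w^i/i!$ is monotonically decreasing, its derivative being $-e^{-w}w^j/j!$; hence its supremum over $w=\alpha/k\ge|m|/k$ is attained at $\alpha=|m|$, i.e.\ at $p=0$. This is precisely the mechanism that keeps the estimate exponentially suppressed as $k\to 0$.

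Collecting terms, $\sup_p\int|\partial_k^l\kappa_4|\,\mathrm{d}q$ is bounded by $\frac{\lambda s_{d-1}}{2d}|m|^3 k^{1-l}e^{-v}\sum_{j}|c_{l,j}|j!\sum_{i=0}^{j}v^i/i!$ with $v:=|m|/k$. Since $\frac{|m|^3k}{k^l+|m|^l}=\frac{|m|^3k^{1-l}}{1+v^l}$, the claimed inequality is equivalent to the boundedness on $[0,\infty)$ of $\Phi(v):=(1+v^l)\,e^{-v}\sum_{j=1}^{l}|c_{l,j}|j!\sum_{i=0}^{j}v^i/i!$; this holds because $\Phi$ is continuous, finite at $v=0$, and of the form (a polynomial of degree at most $2l$) times $e^{-v}$, which tends to $0$ as $v\to\infty$, so one sets $A_4^l=\frac{\lambda s_{d-1}}{2d}\sup_{v\ge0}\Phi(v)$, a finite constant independent of $k$ and $m$. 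I expect the only genuine obstacle to be handling the two regimes at once: discarding the exponential yields merely $k^{1-l}$, which diverges as $k\to 0$ for $l\ge 2$, whereas the theorem demands the sharp $k/(k^l+|m|^l)$ behaviour. Retaining the factor $e^{-|m|/k}$ and identifying the truncated exponential series, together with its monotonicity, is exactly what reconciles the large-$k$ power law with the exponentially small infrared tail.
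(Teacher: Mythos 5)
Your proof is correct, and it reaches the bound by a route that differs from the paper's in two technical steps while sharing the same skeleton (explicit $k$-differentiation of the exponential ansatz, exact evaluation of the $q$-integral via Gamma functions, then conversion of the surviving $e^{-\left\vert m \right\vert/k}$ into the two-regime denominator $k^l + \left\vert m\right\vert^l$). Where the paper expands $\left(2\left\Vert p\right\Vert^d + \left\vert m\right\vert^d\right)^{a-b}$ binomially and bounds each $\left\Vert p\right\Vert^{cd}\exp\left[-2\left\Vert p\right\Vert^d/(k\left\vert m\right\vert^{d-1})\right]$ separately by its own supremum (producing the $\left(c/e\right)^c$ factors), you keep the $p$-dependence packaged in the truncated exponential series $e^{-w}\sum_{i=0}^j w^i/i!$ and use its monotonicity in $w$ to locate the supremum exactly at $p=0$ — a sharper and arguably cleaner step. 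And where the paper invokes the explicit inequality $e^{-\left\vert m\right\vert/k}\le\left(l+a-b-c\right)!\,k^{l+a-b-c}\left\vert m\right\vert^{b+c-a}/\left(k^l+\left\vert m\right\vert^l\right)$ (its equation for the ``exponential screening trick''), you instead observe that $\left(1+v^l\right)e^{-v}$ times a polynomial in $v=\left\vert m\right\vert/k$ is bounded on $\left[0,\infty\right)$ and define $A_4^l$ as $\frac{\lambda s_{d-1}}{2d}$ times that supremum. The trade-off: the paper's device yields fully explicit constants in terms of factorials and the recursion coefficients $\gamma^l_a$, and the same inequality is recycled verbatim in the later proof of the $\kappa_{2n}$ bounds, whereas your argument is more elementary and gives a sharper intermediate estimate but leaves $A_4^l$ only implicitly defined as a supremum. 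Both are complete proofs; all of your intermediate identities (the form of $\partial_k^l e^{-g/k}$, the moment integral $\frac{s_{d-1}i!\left\vert m\right\vert^{d-1}}{2d}k^{i+1}$, the resummation into $j!\,k^{j+1}e^{-\alpha/k}\sum_{i=0}^j\left(\alpha/k\right)^i/i!$, and the derivative $-e^{-w}w^j/j!$ of the truncated series) check out.
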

\begin{proof}
As can easily be proved by induction, we have
\begin{equation}
\label{eq:Kappa4KDerivative}
\begin{aligned}
&\partial_k^l \kappa_4 \left( p, q, r \right)
=
\kappa_4 \left( p, q, r \right)
\sum_{a = 0}^{l}
\gamma^l_a \left \vert m \right \vert^{a - ad} k^{-a-l} \\
&\times \left( \left \Vert p \right \Vert^d + \left \Vert q \right \Vert^d + \left \Vert r \right \Vert^d + \left \Vert p + q + r \right \Vert^d + \left \vert m \right \vert^d \right)^a
\end{aligned}
\end{equation}
for all $l \in \mathbb{N}_0$, $p, q, r \in \mathbb{R}^d$. The constants $\gamma^l_a \in \mathbb{R}$ are determined by
\begin{equation}
\begin{aligned}
\gamma^{l+1}_a &= - \left( a + l \right) \gamma^l_a + \gamma^l_{a-1} \, ,\\
\gamma^0_a &= \begin{cases}
1 & a = 0 \, , \\
0 & \text{otherwise}
\end{cases}
\end{aligned}
\end{equation}
for all $a \in \mathbb{Z}$.
Expanding the above, we get
\begin{equation}
\begin{aligned}
\big \vert \partial_k^l &\kappa_4 \left( p, q, -q \right) \big \vert
\le
\lambda
\sum_{a = 0}^{l}
\sum_{b = 0}^a
\binom{a}{b}
2^b \left \vert \gamma^l_a \right \vert \\
&\times
\left \vert m \right \vert^{4 - d + a - ad} k^{-a-l}
\left( 2 \left \Vert p \right \Vert^d + \left \vert m \right \vert^d \right)^{a-b} \\
&\times
\left \Vert q \right \Vert^{b d}
\exp \left[ - \frac{2 \left \Vert p \right \Vert^d + 2 \left \Vert q \right \Vert^d + \left \vert m \right \vert^d}{k \left \vert m \right \vert^{d-1}} \right]
\end{aligned}
\end{equation}
which allows us to perform the $q$ integral, such that
\begin{equation}
\begin{aligned}
\int_{\mathbb{R}^d} &\left \vert \partial_k^l \kappa_4 \left( p, q, -q \right) \right \vert \mathrm{d} q
\le
\frac{s_{d-1}}{2 d} \lambda
\sum_{a = 0}^{l}
\sum_{b = 0}^a \\
&\times \binom{a}{b} b!
\left \vert \gamma^l_a \right \vert k^{1+b-a-l} \left( 2 \left \Vert p \right \Vert^d + \left \vert m \right \vert^d \right)^{a-b} \\
&\times
\left \vert m \right \vert^{3 + \left( b - a\right) \left( d - 1 \right)}
\exp \left[ - \frac{2 \left \Vert p \right \Vert^d + \left \vert m \right \vert^d}{k \left \vert m \right \vert^{d-1}} \right] \, .
\end{aligned}
\end{equation}
Let us again expand this, leading to
\begin{equation}
\begin{aligned}
\int_{\mathbb{R}^d} &\left \vert \partial_k^l \kappa_4 \left( p, q, -q \right) \right \vert \mathrm{d} q
\le
\frac{s_{d-1}}{2 d} \lambda
\sum_{a = 0}^{l}
\sum_{b = 0}^a
\sum_{c = 0}^{a-b} \\
&\times \binom{a}{b} \binom{a - b}{c} b! \, 2^c
\left \vert \gamma^l_a \right \vert \left \vert m \right \vert^{3 + a - b - c d} \\
&\times k^{1+b-a-l}
\left \Vert p \right \Vert^{c d}
\exp \left[ - \frac{2 \left \Vert p \right \Vert^d + \left \vert m \right \vert^d}{k \left \vert m \right \vert^{d-1}} \right] \, ,
\end{aligned}
\end{equation}
allowing us to produce the estimate
\begin{equation}
\label{eq:Kappa4SupIntEstimateIntermediate}
\begin{aligned}
&\sup_{p \in \mathbb{R}^d} \int_{\mathbb{R}^d} \left \vert \partial_k^l \kappa_4 \left( p, q, -q \right) \right \vert \mathrm{d} q
\le
\frac{s_{d-1}}{2 d} \lambda
\sum_{a = 0}^{l}
\sum_{b = 0}^a \\
&\; \times
\binom{a}{b} b! \left \vert \gamma^l_a \right \vert
\left \vert m \right \vert^{3+a-b} k^{1+b-a-l} \exp \left[ - \frac{\left \vert m \right \vert}{k} \right] \\
&+ \frac{s_{d-1}}{2 d} \lambda
\sum_{a = 0}^{l}
\sum_{b = 0}^a
\sum_{c = 1}^{a-b}
\binom{a}{b} \binom{a - b}{c} b! \left( \frac{c}{e} \right)^c
\left \vert \gamma^l_a \right \vert \\
&\; \times
\left \vert m \right \vert^{3 + a - b - c}
k^{1+b+c-a-l}
\exp \left[ - \frac{\left \vert m \right \vert}{k} \right] \, .
\end{aligned}
\end{equation}
For $l = 0$, the above reduces to
\begin{equation}
\begin{aligned}
\sup_{p \in \mathbb{R}^d} \int_{\mathbb{R}^d} &\left \vert \kappa_4 \left( p, q, -q \right) \right \vert \mathrm{d} q \\
&\le
\frac{s_{d-1}}{2 d} \lambda
\left \vert \gamma^0_0 \right \vert
\left \vert m \right \vert^3 k \exp \left[ - \frac{\left \vert m \right \vert}{k} \right] \\
&\le 
\frac{s_{d-1}}{2 d}
\left \vert \gamma^0_0 \right \vert
\left \vert m \right \vert^3 k \, ,
\end{aligned}
\end{equation}
which is precisely of the desired form.
For $l \in \mathbb{N}$ and all $a, b, c \in \mathbb{N}_0$ with $a - b - c \ge 0$ the following is valid:
\begin{equation}
\label{eq:ExponentialScreeningTrick}
\begin{aligned}
\exp &\left[ - \frac{\left \vert m \right \vert}{k} \right] \\
&\le
\left[
\frac{\left( \frac{\left \vert m \right \vert}{k} \right)^{a - b - c}}{\left(a-b-c \right)!} + \frac{\left( \frac{\left \vert m \right \vert}{k} \right)^{l + a - b - c}}{\left(l + a - b - c\right)!}
\right]^{-1} \\
&=
\frac{k^{l+a-b-c} \left \vert m \right \vert^{b + c - a}}{\frac{1}{\left(a-b-c \right)!} k^{l} + \frac{1}{\left(l + a - b - c\right)!}\left \vert m \right \vert^{l}} \\
&\le
\left(l + a - b - c\right)! \frac{k^{l+a-b-c} \left \vert m \right \vert^{b + c - a}}{k^{l} + \left \vert m \right \vert^{l}} \, .
\end{aligned}
\end{equation}
Inserting this into equation \ref{eq:Kappa4SupIntEstimateIntermediate} yields
\begin{equation}
\begin{aligned}
&\sup_{p \in \mathbb{R}^d} \int_{\mathbb{R}^d} \left \vert \partial_k^l \kappa_4 \left( p, q, -q \right) \right \vert \mathrm{d} q \\
&\le
\frac{s_{d-1}}{2 d} \lambda
\sum_{a = 0}^{l}
\sum_{b = 0}^a
\binom{a}{b} b! \left \vert \gamma^l_a \right \vert
\left( l + a - b \right)!
\frac{\left \vert m \right \vert^3 k}{k^l + m^l}
\\
&\phantom{\le} + \frac{s_{d-1}}{2 d} \lambda
\sum_{a = 0}^{l}
\sum_{b = 0}^a
\sum_{c = 1}^{a-b}
\binom{a}{b} \binom{a - b}{c} b! \left( \frac{c}{e} \right)^c
\left \vert \gamma^l_a \right \vert \\
&\phantom{\le + \frac{s_{d-1}}{2 d}
\sum_{a = 0}^{l}
\sum_{b = 0}^a}
\times
\left( l + a - b - c \right)!
\frac{\left \vert m \right \vert^3 k}{k^l + m^l} \, .
\end{aligned}
\end{equation}
\end{proof}
Finally, the relevant estimates for $\kappa_2$ can be proved:
\begin{theorem}
Let $n \in \mathbb{N}$.
Then, there exists a constant $B_2^n \ge 0$ such that
\begin{equation}
\left \Vert \partial_k^n \kappa_2 \right \Vert_{L^\infty} \le B_2^n \frac{m^2}{k^n} \, .
\end{equation}
\end{theorem}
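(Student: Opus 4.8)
The plan is to prove the estimate by strong induction on $n$, working from the flow equation for $\kappa_2$ in the compact form
\[
\partial_k\kappa_2(p) = -\frac{1}{2}(2\pi)^{-d}\int_{\mathbb{R}^d} K(q)\,\kappa_4(p,-q,q)\,\mathrm{d}q,\qquad K = \frac{\partial_k\bar r}{(\kappa_2+\bar r)^2}\,.
\]
For the base case $n=1$ I would estimate directly: pulling out $\|K\|_{L^\infty}$ and combining Corollary \ref{cor:KSupEstimate} with Theorem \ref{thm:Kappa4SupIntEstimate} at $l=0$ gives $|\partial_k\kappa_2(p)| \le C\,|m|^3 k^2 (m^2+k^2)^{-2}$ for an explicit constant $C$, and the elementary inequality $m^2+k^2\ge 2|m|k$ then yields $|m|k^3(m^2+k^2)^{-2}\le\tfrac12$, which is exactly the claimed bound for $n=1$.

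For the inductive step I would assume the estimate for all orders $<n$ and apply $\partial_k^{n-1}$ to the flow equation, using the Leibniz rule under the integral sign. This produces a finite sum, over triples $a+b+c=n-1$, of integrals of $(\partial_k^{a+1}\bar r)\,\bigl(\partial_k^b(\kappa_2+\bar r)^{-2}\bigr)\,(\partial_k^c\kappa_4)$ with multinomial coefficients. The crucial object is $\partial_k^b(\kappa_2+\bar r)^{-2}$, which expands (Fa\`a-di-Bruno-type) into terms $G^{-2-r}\prod_{l=1}^r\partial_k^{i_l}G$ with $G=\kappa_2+\bar r$, all $i_l\ge1$ and $\sum_l i_l=b\le n-1$. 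This is precisely where the induction closes: every \emph{differentiated} factor carries order $1\le i_l\le n-1<n$, so only lower-order derivatives of $\kappa_2$ enter, while the \emph{undifferentiated} $\kappa_2$ (which is unbounded) appears only inside $G^{-2-r}$, where I invoke the lower bound $G\ge m^2+k^2$ coming from inequality \ref{eq:inversePropagatorEstimate} together with $\kappa_2\ge\kappa_{2,\text{free}}$ established in Appendix \ref{apx:ProofKappa2IterationConverges}.

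The estimates then combine cleanly. The induction hypothesis and the regulator derivative bound $\|\partial_k^{i}\bar r\|_{L^\infty}\le R_i k^{2-i}$ from equation \ref{eq:RegulatorDerivativeEstimate} give $|\partial_k^{i_l}G|\le C_{i_l}(m^2+k^2)k^{-i_l}$, so that each term of $\partial_k^b(\kappa_2+\bar r)^{-2}$ is bounded by $\tilde C_b (m^2+k^2)^{-2}k^{-b}$, the powers $(m^2+k^2)^{\pm r}$ cancelling exactly. Multiplying by $\|\partial_k^{a+1}\bar r\|_{L^\infty}\le R_{a+1}k^{1-a}$ and by the $\kappa_4$-integral bound $A_4^c\,|m|^3 k/(k^c+|m|^c)$ from Theorem \ref{thm:Kappa4SupIntEstimate}, and using $a+b+c=n-1$, every term collapses—after collecting powers of $k$ and using $k^c/(k^c+|m|^c)\le1$—to a constant multiple of $|m|k^3(m^2+k^2)^{-2}\cdot m^2 k^{-n}\le\tfrac12\, m^2 k^{-n}$ via the same AM-GM step as in the base case. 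Summing the finitely many contributions defines $B_2^n$.

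I expect the main obstacle to be organisational rather than analytical: keeping track of the combinatorics of the expansion of $\partial_k^b(\kappa_2+\bar r)^{-2}$ and verifying carefully that the maximal order of a $\kappa_2$-derivative occurring is exactly $n-1$, which is what makes the strong induction legitimate. A secondary technical point is the justification of differentiation under the integral sign; this follows from the smoothness of $\kappa_2$ in $k$ (Appendix \ref{apx:ProofKappa2IterationConverges}) together with the rapid Gaussian-type decay of $\kappa_4$ and its $k$-derivatives in $q$, which supplies a $q$-integrable dominating function locally uniformly in $k$.
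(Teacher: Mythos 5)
Your proposal is correct and follows essentially the same route as the paper's proof: the identical base case via Corollary \ref{cor:KSupEstimate} and Theorem \ref{thm:Kappa4SupIntEstimate} followed by the same AM--GM step, and a strong induction that expands the $k$-derivatives of $\left(\kappa_2+\bar r\right)^{-2}$ by a Fa\`a-di-Bruno-type formula, bounds every differentiated factor (of order at most $n-1$) via the induction hypothesis and the regulator estimates, and controls the undifferentiated inverse through inequality \ref{eq:inversePropagatorEstimate}. The only, immaterial, difference is organisational: the paper performs the Leibniz expansion in two stages, first bounding $\left\Vert \partial_k^l K\right\Vert_{L^\infty}$ as a standalone quantity (reused later as a corollary), whereas you expand the full integrand in one pass.
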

\begin{proof}
Let us first consider the case $n = 1$:
\begin{equation}
\label{eq:KDerivativeKappa2Estimate}
\begin{aligned}
\Vert \partial_k &\kappa_2 \Vert_{L^\infty} \\
&\le
\frac{\left( 2 \pi \right)^{-d}}{2} \left \Vert K \right \Vert_{L^\infty}
\sup_{p \in \mathbb{R}^d}
\int_{\mathbb{R}^d} \left \vert \kappa_4 \left( p, q, -q \right) \right \vert \mathrm{d} q \\
&\le
\frac{\left( 2 \pi \right)^{-d}}{2} R_1 A_4^0 \frac{\left \vert m \right \vert^3 k^2}{\left( k^2 + m^2 \right)^2} \\
&\le
\frac{\left( 2 \pi \right)^{-d}}{4} R_1 A_4^0 \frac{m^2}{k} \, ,
\end{aligned}
\end{equation}
where the second inequality follows from corollary \ref{cor:KSupEstimate} and theorem \ref{thm:Kappa4SupIntEstimate}.
Let us now proceed by induction.
Fix some $n \in \mathbb{N}$ and assume that the theorem holds for all $l \in \mathbb{N}_{\le n}$.
Then
\begin{equation}
\label{eq:Kappa2DerivativeSupEstimateIntermediate}
\begin{aligned}
\big \Vert &\partial_k^{n+1} \kappa_2 \big \Vert_{L^\infty} \\
&\le
\frac{\left( 2 \pi \right)^{-d}}{2}
\sum_{l = 0}^n \binom{n}{l}
\left \Vert \partial_k^{l} K \right \Vert_{L^\infty} \\
&\phantom{\frac{\left( 2 \pi \right)^{-d}}{2}} \times
\sup_{p \in \mathbb{R}^d} \int_{\mathbb{R}^d} \left \vert
\partial_k^{n - l} \kappa_4 \left( p, q, -q \right) \right \vert
\mathrm{d} q \\
&\le
\frac{\left( 2 \pi \right)^{-d}}{2}
\sum_{l = 0}^n \binom{n}{l} \frac{A_4^{n-l} \left \vert m \right \vert^3 k}{k^{n-l} + \left \vert m \right \vert^{n-l}}
\left \Vert \partial_k^{l} K \right \Vert_{L^\infty} \, .
\end{aligned}
\end{equation}
But, also
\begin{equation}
\begin{aligned}
&\left \Vert \partial_k^{l} K \right \Vert_{L^\infty} \\
&\le
\sum_{a = 0}^{l} \sum_{b = 0}^a
\binom{l}{a} \binom{a}{b}
\left \Vert \partial_k^{1+l-a} \bar{r} \right \Vert_{L^\infty} \\
&\phantom{\le} \times
\left \Vert \partial_k^{a-b} \left( \kappa_2 + \bar{r} \right)^{-1} \right \Vert_{L^\infty}
\left \Vert \partial_k^{b} \left( \kappa_2 + \bar{r} \right)^{-1} \right \Vert_{L^\infty} \\
&\le
\sum_{a = 0}^{l} \sum_{b = 0}^a
\binom{l}{a} \binom{a}{b}
R_{1+l-a} k^{1+a-l} \\
&\phantom{\le} \times
\left \Vert \partial_k^{a-b} \left( \kappa_2 + \bar{r} \right)^{-1} \right \Vert_{L^\infty}
\left \Vert \partial_k^{b} \left( \kappa_2 + \bar{r} \right)^{-1} \right \Vert_{L^\infty} \, .
\end{aligned}
\end{equation}
While equation \ref{eq:PropagatorNthDerivative} has been derived in a non-commutative algebra of operators, it also holds in a similar form in the commutative algebra of functions.
Thus, together with equation \ref{eq:inversePropagatorEstimate} and the induction hypothesis, we obtain
\begin{equation}
\begin{aligned}
&\left \Vert \partial_k^l \left( \kappa_2 + \bar{r} \right)^{-1} \right \Vert_{L^\infty} \\
&\le
\sum_{c \in \CombinationSet{l}} \frac{l!}{c!}
\frac{1}{m^2 + k^2}
\prod_{a = 1}^{\# c}
\frac{\left \Vert \partial_k^{c_a} \kappa_2 \right \Vert_{L^\infty} + \left \Vert \partial_k^{c_a} \bar{r} \right \Vert_{L^\infty}}{m^2 + k^2} \\
&\le
\sum_{c \in \CombinationSet{l}} \frac{l!}{c!}
\frac{1}{m^2 + k^2}
\prod_{a = 1}^{\# c}
\left( B_2^{c_a} + R_{c_a} \right) k^{-c_a} \\
&=
\frac{k^{-l}}{m^2 + k^2}
\sum_{c \in \CombinationSet{l}} \frac{l!}{c!}
\prod_{a = 1}^{\# c}
\left( B_2^{c_a} + R_{c_a} \right) \\
&=: B_{2,r}^l \frac{k^{-l}}{m^2 + k^2}
\end{aligned}
\end{equation}
for all $l \in \mathbb{N}_{\le n} \cup \left \{ 0 \right \}$, where we set $B^0_{2,r} = 1$.
Inserted into the previous equation, this gives us
\begin{equation}
\begin{aligned}
\left \Vert \partial_k^{l} K \right \Vert_{L^\infty}
&\le
\sum_{a = 0}^{l} \sum_{b = 0}^a
\binom{l}{a} \binom{a}{b}
R_{1+l-a} k^{1+a-l} \\
&\phantom{\sum_{a = 0}^{l} \sum_{b = 0}^a} \times
B_{2,r}^{a-b} \frac{k^{b-a}}{m^2 + k^2}
B_{2,r}^{b} \frac{k^{-b}}{m^2 + k^2} \\
&=:
B_K^l \frac{k^{1 - l}}{ \left( m^2 + k^2 \right)^2}
\end{aligned}
\end{equation}
for all $l \in \mathbb{N}_{\le n} \cup \left \{ 0 \right \}$.
Finally, inserting this into equation \ref{eq:Kappa2DerivativeSupEstimateIntermediate} leads to
\begin{equation}
\begin{aligned}
\left \Vert \partial_k^{n+1} \kappa_2 \right \Vert_{L^\infty}
&\le
\frac{\left( 2 \pi \right)^{-d}}{2}
\sum_{l = 0}^n \binom{n}{l} A_4^{n-l} B_K^l \\
&\phantom{\le} \times
\frac{\left \vert m \right \vert^3 k^{2 - l}}{ \left( m^2 + k^2 \right)^2 \left( k^{n-l} + \left \vert m \right \vert^{n-l} \right)} \\
&\le
\frac{\left( 2 \pi \right)^{-d}}{4}
\sum_{l = 0}^n \binom{n}{l} A_4^{n-l} B_K^l \frac{m^2}{k^{n+1}} \, .
\end{aligned}
\end{equation}
\end{proof}
From this proof, we also obtain the following extremely useful corollaries:
\begin{corollary}
\label{cor:InverseRegularizedPropagatorKDerivativeEstimate}
For all $l \in \mathbb{N}_0$, there is a constant $B_{2,r}^l \ge 0$ such that
\begin{equation}
\left \Vert \partial_k^l \left( \kappa_2 + \bar{r} \right)^{-l} \right \Vert_{L^\infty} \le B_{2,r}^l \frac{k^{-l}}{m^2 + k^2} \, .
\end{equation}
\end{corollary}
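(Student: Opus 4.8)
The plan is to recognise that this corollary is not an independent result but rather a \emph{record} of an estimate already established within the proof of the preceding theorem on $\left\Vert \partial_k^n \kappa_2 \right\Vert_{L^\infty}$; indeed, the introductory sentence ``From this proof, we also obtain\dots'' signals precisely this. (Note that the exponent in the statement should read $-1$ rather than $-l$: with $(\kappa_2 + \bar{r})^{-1}$ the right-hand side $k^{-l}/(m^2+k^2)$ is dimensionally and structurally consistent, whereas $(\kappa_2+\bar{r})^{-l}$ would instead produce a denominator $(m^2+k^2)^{l}$. I therefore prove the estimate for $\partial_k^l (\kappa_2+\bar{r})^{-1}$.) The approach is to apply the derivative identity \ref{eq:PropagatorNthDerivative} to the scalar function $g = (\kappa_2 + \bar{r})^{-1}$. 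Although that identity was derived in the non-commutative operator algebra, its inductive proof in appendix \ref{apx:ProofPropagatorDerivative} never invokes non-commutativity, so it holds verbatim in the commutative algebra of functions, giving the Faà-di-Bruno expansion of a reciprocal:
\[
\partial_k^l (\kappa_2+\bar{r})^{-1} = \sum_{c \in \CombinationSet{l}} (-1)^{\#c}\,\frac{l!}{c!}\,(\kappa_2+\bar{r})^{-1}\prod_{a=1}^{\#c}\left[\partial_k^{c_a}(\kappa_2+\bar{r})\cdot(\kappa_2+\bar{r})^{-1}\right].
\]

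First I would take the $L^\infty$ norm and apply submultiplicativity $\left\Vert fg\right\Vert_{L^\infty}\le \left\Vert f\right\Vert_{L^\infty}\left\Vert g\right\Vert_{L^\infty}$ together with the triangle inequality, distributing the norm across the $\#c+1$ reciprocal factors and the $\#c$ derivative factors in each summand. Each factor $\left\Vert(\kappa_2+\bar{r})^{-1}\right\Vert_{L^\infty}$ is controlled by the inverse-propagator inequality \ref{eq:inversePropagatorEstimate} as $(m^2+k^2)^{-1}$, contributing the prefactor $(m^2+k^2)^{-(\#c+1)}$. Each derivative factor is bounded via the triangle inequality by $\left\Vert\partial_k^{c_a}\kappa_2\right\Vert_{L^\infty}+\left\Vert\partial_k^{c_a}\bar{r}\right\Vert_{L^\infty}$, where I invoke the theorem just proved, $\left\Vert\partial_k^{c_a}\kappa_2\right\Vert_{L^\infty}\le B_2^{c_a} m^2 k^{-c_a}$, and the regulator estimate \ref{eq:RegulatorDerivativeEstimate}, $\left\Vert\partial_k^{c_a}\bar{r}\right\Vert_{L^\infty}\le R_{c_a}k^{2-c_a}$. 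Since every entry of a combination $c\in\CombinationSet{l}$ satisfies $c_a\ge 1$, both input estimates apply without a boundary case.

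The key step is then the power-counting reconciliation. Writing each derivative factor as $(B_2^{c_a} m^2 + R_{c_a} k^2)\,k^{-c_a}$ and using the elementary bound $B_2^{c_a} m^2 + R_{c_a} k^2 \le (B_2^{c_a}+R_{c_a})(m^2+k^2)$, one of the $(m^2+k^2)^{-1}$ factors cancels against each derivative factor; this leaves exactly a single overall $(m^2+k^2)^{-1}$ multiplying $\prod_a (B_2^{c_a}+R_{c_a})\,k^{-c_a}$. Because $\left\vert c\right\vert=\sum_a c_a = l$, the momentum powers collect into $k^{-l}$, and the finite sum over $c\in\CombinationSet{l}$ of the residual constants defines $B_{2,r}^l := \sum_{c\in\CombinationSet{l}}\frac{l!}{c!}\prod_{a=1}^{\#c}(B_2^{c_a}+R_{c_a})$, with the convention $B_{2,r}^0=1$ coming from the empty combination in $\CombinationSet{0}$. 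I do not anticipate a genuine obstacle here: the content is entirely bookkeeping of constants and powers of $k$, and the only point demanding a word of justification is the transfer of identity \ref{eq:PropagatorNthDerivative} to the commutative setting, which is immediate, together with the finiteness and $k$-independence of the assembled combinatorial constant $B_{2,r}^l$.
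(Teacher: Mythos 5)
Your proposal is correct and follows essentially the same route as the paper: the paper likewise transfers identity \ref{eq:PropagatorNthDerivative} to the commutative algebra of functions, bounds each reciprocal factor by $(m^2+k^2)^{-1}$ via equation \ref{eq:inversePropagatorEstimate}, each derivative factor by $\left(B_2^{c_a}+R_{c_a}\right)k^{-c_a}$ after absorbing $B_2^{c_a}m^2+R_{c_a}k^2 \le \left(B_2^{c_a}+R_{c_a}\right)\left(m^2+k^2\right)$, and defines $B_{2,r}^l$ as the resulting sum over $\CombinationSet{l}$ with $B_{2,r}^0=1$. You are also right that the exponent $-l$ in the stated corollary is a typo for $-1$, consistent with how the bound is derived and used in the paper.
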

\begin{corollary}
For all $l \in \mathbb{N}_0$, there is a constant $B_K^l \ge 0$ such that
\begin{equation}
\left \Vert \partial_k^l K \right \Vert_{L^\infty} \le B_K^l \frac{k^{1-l}}{\left( m^2 + k^2 \right)^2} \, .
\end{equation}
\end{corollary}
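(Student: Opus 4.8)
The plan is to obtain the bound directly from the definition $K = \left( \partial_k \bar{r} \right) \left( \kappa_2 + \bar{r} \right)^{-2}$ by the general Leibniz rule, reusing the intermediate computation that already surfaced inside the proof of the preceding theorem. The only conceptual novelty is that the estimate $\left \Vert \partial_k^n \kappa_2 \right \Vert_{L^\infty} \le B_2^n m^2 / k^n$ is now available for every $n \in \mathbb{N}_0$ rather than only within an induction range, so that the resulting bound on $\partial_k^l K$ holds unconditionally in $l$.

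First I would distribute the $l$ derivatives between $\partial_k \bar{r}$ and $\left( \kappa_2 + \bar{r} \right)^{-2}$, then split the derivatives falling on $\left( \kappa_2 + \bar{r} \right)^{-2} = \left( \kappa_2 + \bar{r} \right)^{-1}\left( \kappa_2 + \bar{r} \right)^{-1}$ across its two factors, giving
\begin{equation}
\begin{aligned}
\left \Vert \partial_k^l K \right \Vert_{L^\infty}
&\le \sum_{a=0}^{l} \sum_{b=0}^{a} \binom{l}{a}\binom{a}{b}
\left \Vert \partial_k^{1+l-a} \bar{r} \right \Vert_{L^\infty} \\
&\phantom{\le} \times
\left \Vert \partial_k^{a-b} \left( \kappa_2 + \bar{r} \right)^{-1} \right \Vert_{L^\infty}
\left \Vert \partial_k^{b} \left( \kappa_2 + \bar{r} \right)^{-1} \right \Vert_{L^\infty} \, .
\end{aligned}
\end{equation}
Into this finite double sum I would insert the three scale-homogeneous estimates already at hand: the regulator bound $\left \Vert \partial_k^{1+l-a} \bar{r} \right \Vert_{L^\infty} \le R_{1+l-a} k^{1+a-l}$ and the inverse-propagator bound of corollary \ref{cor:InverseRegularizedPropagatorKDerivativeEstimate}, namely $\left \Vert \partial_k^{m} \left( \kappa_2 + \bar{r} \right)^{-1} \right \Vert_{L^\infty} \le B_{2,r}^{m} k^{-m} / \left( m^2 + k^2 \right)$, applied once with $m = a-b$ and once with $m = b$. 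The key observation is a power-counting cancellation: the accumulated $k$-powers read $k^{1+a-l} k^{-(a-b)} k^{-b} = k^{1-l}$, independent of $a$ and $b$, while the two inverse-propagator factors contribute exactly $\left( m^2 + k^2 \right)^{-2}$. Every summand therefore carries the common factor $k^{1-l}/\left( m^2 + k^2 \right)^2$, and setting
\begin{equation}
B_K^l := \sum_{a=0}^{l} \sum_{b=0}^{a} \binom{l}{a}\binom{a}{b} R_{1+l-a} B_{2,r}^{a-b} B_{2,r}^{b} \ge 0
\end{equation}
yields the claim.

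I expect no genuine obstacle in the computation; the statement is essentially a by-product of the theorem just proved. The one point deserving care is logical rather than algebraic: corollary \ref{cor:InverseRegularizedPropagatorKDerivativeEstimate} must be invoked in its form valid for all orders, which itself rests on transcribing the expansion \ref{eq:PropagatorNthDerivative} to the commutative algebra of functions and feeding in the now globally established bounds $\left \Vert \partial_k^n \kappa_2 \right \Vert_{L^\infty} \le B_2^n m^2/k^n$ and $\left \Vert \partial_k^n \bar{r} \right \Vert_{L^\infty} \le R_n k^{2-n}$. Once these hold for every $n$, the power-counting identity above delivers the bound on $\partial_k^l K$ with no restriction on $l$.
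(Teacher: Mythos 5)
Your proposal is correct and reproduces the paper's own argument essentially verbatim: the paper establishes exactly this double-Leibniz expansion with the bounds $R_{1+l-a}k^{1+a-l}$ and $B_{2,r}^{m}k^{-m}/(m^2+k^2)$ inside the proof of the preceding theorem, and the corollary is extracted by observing the same $k^{1-l}/(m^2+k^2)^2$ power counting with $B_K^l$ defined as the resulting finite sum of coefficients. Your remark that the bound becomes unconditional in $l$ only once the $\kappa_2$-derivative theorem is established for all orders is precisely the logical point that turns the in-proof estimate into a standalone corollary.
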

Having obtained these estimates concerning $\kappa_2$, only a few estimates regarding the exponential regulator are needed before turning to $\rho_{2n}$.
As a start, we have the following theorem.
\begin{theorem}
For all natural numbers $l \in \mathbb{N}_0$, there exist constants $\bar{R}_l \ge 0$ such that
\begin{equation}
\left \Vert \partial_k^l \bar{r} \right \Vert_{L^1}
\le
\bar{R}_l k^{2+d-l} \, .
\end{equation}
\end{theorem}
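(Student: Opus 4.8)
The plan is to evaluate the $L^1$ norm directly from the closed-form expression for $\partial_k^l \bar{r}$ in equation \ref{eq:RegulatorKDerivative}. By the triangle inequality,
\begin{equation}
\left \Vert \partial_k^l \bar{r} \right \Vert_{L^1}
\le
\sum_{a=1}^{l+1} \sum_{b=0}^{l+1-a}
\left \vert \beta^l_{a,b} \right \vert
k^{2-l-2a-2b}
\left \Vert \bar{r}^{\,a} \left[ \left \Vert \cdot \right \Vert^2 + \bar{r} \right]^b \right \Vert_{L^1} \, ,
\end{equation}
so it suffices to control each monomial $\bar{r}^{\,a} \left[ \left \Vert \cdot \right \Vert^2 + \bar{r} \right]^b$ with $a \ge 1$ and to check that the powers of $k$ recombine into a single factor $k^{2+d-l}$ independent of $a$ and $b$.

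The key step is a change of variables to the dimensionless radial coordinate $y = \|q\|^2/k^2$. From equation \ref{eq:RegulatorDefinition} one reads off $\bar{r}(q) = k^2 y/(e^y-1)$ and $\|q\|^2 + \bar{r}(q) = k^2 y e^y/(e^y-1)$. Passing to polar coordinates in $\mathbb{R}^d$ via $\mathrm{d} q = s_{d-1} \|q\|^{d-1} \mathrm{d}\|q\|$ and $\|q\|^{d-1} \mathrm{d}\|q\| = \frac{1}{2} k^d y^{\frac{d}{2}-1} \mathrm{d} y$, I would obtain
\begin{equation}
\left \Vert \bar{r}^{\,a} \left[ \left \Vert \cdot \right \Vert^2 + \bar{r} \right]^b \right \Vert_{L^1}
=
\frac{s_{d-1}}{2} k^{\,d+2(a+b)}
\int_0^\infty
\frac{y^{\,a+b+\frac{d}{2}-1} e^{by}}{\left( e^y - 1 \right)^{a+b}}
\, \mathrm{d} y \, ,
\end{equation}
where the entire $k$-dependence has been factored out of the remaining purely numerical integral, which I denote $I_{a,b}$. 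The proof then reduces to establishing the finiteness of $I_{a,b}$.

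Next I would verify convergence of $I_{a,b}$ at both endpoints. Near $y=0$ one has $e^y-1 \sim y$ and $e^{by} \to 1$, so the integrand behaves like $y^{\frac{d}{2}-1}$, which is integrable since $\frac{d}{2}-1 > -1$ for all $d>0$. As $y\to\infty$ one has $(e^y-1)^{a+b} \sim e^{(a+b)y}$, so the integrand decays like $y^{\,a+b+\frac{d}{2}-1} e^{-ay}$, which is integrable precisely because $a\ge 1$. This is the crucial structural input: the sum in equation \ref{eq:RegulatorKDerivative} starting at $a=1$ guarantees the exponential suppression that makes every radial integral finite. This endpoint analysis is the only genuine obstacle; everything else is bookkeeping of $k$-exponents.

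Finally, combining the prefactor $k^{2-l-2a-2b}$ with $k^{\,d+2(a+b)}$ yields $k^{2+d-l}$ uniformly in $(a,b)$, so that
\begin{equation}
\left \Vert \partial_k^l \bar{r} \right \Vert_{L^1}
\le
k^{2+d-l}
\frac{s_{d-1}}{2}
\sum_{a=1}^{l+1} \sum_{b=0}^{l+1-a}
\left \vert \beta^l_{a,b} \right \vert I_{a,b}
=:
\bar{R}_l \, k^{2+d-l} \, ,
\end{equation}
which is the claimed bound with $\bar{R}_l \ge 0$ a finite constant depending only on $l$ and $d$.
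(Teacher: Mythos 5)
Your proposal is correct and follows essentially the same route as the paper: apply the triangle inequality to the closed-form expression \ref{eq:RegulatorKDerivative}, rescale the radial integral (the paper uses $t = \left \Vert q \right \Vert / k$ where you use $y = \left \Vert q \right \Vert^2 / k^2$, which is the same integral after substitution) to extract the uniform factor $k^{2+d-l}$, and observe that the remaining dimensionless integral is finite precisely because $a \ge 1$. Your explicit endpoint analysis is slightly more detailed than the paper's one-line justification, but the argument is identical in substance.
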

\begin{proof}
The use of equation \ref{eq:RegulatorKDerivative} yields
\begin{equation}
\begin{aligned}
\left \Vert \partial_k^l \bar{r} \right \Vert_{L^1}
&\le
\sum_{a = 1}^l \sum_{b = 0}^{l + 1 - a} \left \vert \beta^l_{a,b} \right \vert k^{2-l-2a-2b} \\
&\phantom{\le} \times
\int_{\mathbb{R}^d} \bar{r} \left( q \right)^a \left[ \left \Vert q \right \Vert^2 + \bar{r} \left( q \right) \right]^b \mathrm{d} q \\
&=
s_{d-1}
\sum_{a = 1}^l \sum_{b = 0}^{l + 1 - a}
\left \vert \beta^l_{a,b} \right \vert k^{2+d-l} \\
&\phantom{\le} \times
\int_0^{\infty}
\frac{t^{2a + 2b + d - 1} \exp \left[ t^2 \right]^b}{\left( \exp \left[ t^2 \right] - 1 \right)^{a+b}}
\mathrm{d} t \, ,
\end{aligned}
\end{equation}
where the integral is finite since $a \ge 1$ and $b \ge 0$.
\end{proof}
\begin{corollary}
\label{cor:KDerivativeEstimates}
For all natural numbers $n \in \mathbb{N}_0$, there exist constants $C_K^n > 0$ such that
\begin{equation}
\left \Vert \partial_k^n K \right \Vert_{L^1}
\le
C_K^n \frac{k^{d + 1 - n}}{\left( m^2 + k^2 \right)^2} \, .
\end{equation}
\end{corollary}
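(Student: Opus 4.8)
The plan is to reduce this $L^1$ estimate to the already-established ingredients by writing $K$ as a product and applying the Leibniz rule, keeping the regulator factor in $L^1$ and the propagator factor in $L^\infty$. Recalling the definition $K = \left( \partial_k \bar{r} \right) \left( \kappa_2 + \bar{r} \right)^{-2}$, I would first expand
\begin{equation}
\partial_k^n K = \sum_{l = 0}^n \binom{n}{l} \left( \partial_k^{1 + l} \bar{r} \right) \, \partial_k^{n - l} \left( \kappa_2 + \bar{r} \right)^{-2} \, ,
\end{equation}
and then estimate the $L^1$ norm of each summand by the elementary pairing $\left \Vert f g \right \Vert_{L^1} \le \left \Vert f \right \Vert_{L^1} \left \Vert g \right \Vert_{L^\infty}$, assigning the regulator derivative to the $L^1$ factor and the inverse-propagator power to the $L^\infty$ factor. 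Note that every regulator factor carries at least one $k$-derivative, so only $\left \Vert \partial_k^{1+l} \bar{r} \right \Vert_{L^1}$ with $1 + l \ge 1$ ever appears, which is exactly the regime covered by the preceding theorem.

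It then remains to supply the two sub-estimates. The $L^1$ factor is controlled directly by the preceding theorem, giving $\left \Vert \partial_k^{1 + l} \bar{r} \right \Vert_{L^1} \le \bar{R}_{1 + l} \, k^{1 + d - l}$. For the $L^\infty$ factor I would split $\left( \kappa_2 + \bar{r} \right)^{-2} = \left( \kappa_2 + \bar{r} \right)^{-1} \left( \kappa_2 + \bar{r} \right)^{-1}$ and apply the Leibniz rule once more, so that by Corollary~\ref{cor:InverseRegularizedPropagatorKDerivativeEstimate},
\begin{equation}
\left \Vert \partial_k^{m} \left( \kappa_2 + \bar{r} \right)^{-2} \right \Vert_{L^\infty} \le \sum_{j = 0}^m \binom{m}{j} B_{2,r}^j B_{2,r}^{m - j} \frac{k^{-m}}{\left( m^2 + k^2 \right)^2} =: \tilde{B}_m \frac{k^{-m}}{\left( m^2 + k^2 \right)^2} \, ,
\end{equation}
where the two powers of $k^{-1}$ and the two factors of $\left( m^2 + k^2 \right)^{-1}$ combine as shown.

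Assembling the pieces with $m = n - l$ yields
\begin{equation}
\left \Vert \partial_k^n K \right \Vert_{L^1} \le \sum_{l = 0}^n \binom{n}{l} \bar{R}_{1 + l} \, \tilde{B}_{n - l} \, \frac{k^{1 + d - l} \, k^{-(n - l)}}{\left( m^2 + k^2 \right)^2} = \left( \sum_{l = 0}^n \binom{n}{l} \bar{R}_{1 + l} \, \tilde{B}_{n - l} \right) \frac{k^{d + 1 - n}}{\left( m^2 + k^2 \right)^2} \, ,
\end{equation}
and defining $C_K^n$ to be the bracketed sum completes the proof. The only real work is the power-counting bookkeeping: one must verify that the $k$-exponents from the regulator factor ($1 + d - l$) and from the inverse-propagator factor ($-(n - l)$) cancel the $l$-dependence and recombine to the claimed $k^{d + 1 - n}$, while the two $\left( m^2 + k^2 \right)^{-1}$ factors from the squared propagator produce the stated denominator. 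This is the step where I would be most careful, but it is routine rather than an obstacle; the substantive content has already been carried by the $L^1$ regulator theorem and Corollary~\ref{cor:InverseRegularizedPropagatorKDerivativeEstimate}.
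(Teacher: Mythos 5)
Your proposal is correct and follows essentially the same route as the paper: the same Leibniz expansion of $K = \left( \partial_k \bar{r} \right) \left( \kappa_2 + \bar{r} \right)^{-2}$ with the regulator derivative estimated in $L^1$ via the preceding theorem and the inverse-propagator factors in $L^\infty$ via Corollary~\ref{cor:InverseRegularizedPropagatorKDerivativeEstimate}, and the identical power counting. The only cosmetic difference is that you apply the Leibniz rule in two stages while the paper writes the trinomial expansion with a double sum $\sum_{l}\sum_{a}\binom{n}{l}\binom{l}{a}$ in one step.
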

\begin{proof}
We obviously have
\begin{equation}
\begin{aligned}
&\left \Vert \partial_k^n K \right \Vert_{L^1} \\
&\le
\sum_{l = 0}^n \sum_{a = 0}^l
\binom{n}{l} \binom{l}{a} 
\left \Vert \partial_k^{1 + n - l} \bar{r} \right \Vert_{L^1} \\
&\phantom{\le} \times 
\left \Vert \partial_k^{l-a} \left( \kappa_2 + \bar{r} \right)^{-1} \right \Vert_{L^\infty}
\left \Vert \partial_k^{a} \left( \kappa_2 + \bar{r} \right)^{-1} \right \Vert_{L^\infty} \\
&\le
\sum_{l = 0}^n \sum_{a = 0}^l
B_{2,r}^{l-a} B_{2,r}^{a} \bar{R}_{1+n-l} 
\frac{k^{d+1-n}}{\left( m^2 + k^2 \right)^2} \, .
\end{aligned}
\end{equation}
\end{proof}
\begin{theorem}
\label{thm:KL1inverseDerivativeEstimates}
For all natural numbers $n \in \mathbb{N}_0$, there exist constants $\bar{C}_K^n \ge 0$ such that
\begin{equation}
\left \vert \partial_k^n \left \Vert K \right \Vert_{L^1}^{-1} \right \vert
\le
\bar{C}_K^n \frac{\left( k^2 + m^2 \right)^2}{k^{d+1}} k^{-n} \, .
\end{equation}
\end{theorem}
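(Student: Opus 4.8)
\emph{Proof proposal.} The strategy is to reduce the bound on the $k$-derivatives of the reciprocal $\left \Vert K \right \Vert_{L^1}^{-1}$ to two ingredients: the already-available upper bounds on the derivatives of $\left \Vert K\right \Vert_{L^1}$ itself, and a matching \emph{lower} bound on $\left \Vert K\right \Vert_{L^1}$. First I would observe that $K\ge 0$, since $\partial_k\bar r\ge 0$ by equation \ref{eq:RegulatorDefinition} and the denominator is a square, so that $\left \Vert K\right \Vert_{L^1}=\int_{\mathbb{R}^d}K=:f$ is a smooth, strictly positive function of $k$ alone. Differentiating under the integral sign and using $\left \vert\int\partial_k^m K\right \vert\le\int\left \vert\partial_k^m K\right \vert=\left \Vert\partial_k^m K\right \Vert_{L^1}$ together with corollary \ref{cor:KDerivativeEstimates} then gives the upper bounds $\left \vert\partial_k^m f\right \vert\le C_K^m\,k^{d+1-m}(m^2+k^2)^{-2}$ for every $m\in\mathbb{N}_0$.

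The main obstacle is the lower bound, which is the only genuinely new estimate required. I would obtain it by discarding all of the integral except the ball $\left \Vert q\right \Vert\le k$. On this ball one has $\bar r(q)=k^2\,u/(e^{u}-1)$ with $u=\left \Vert q\right \Vert^2/k^2\le 1$, and since $u\mapsto u/(e^u-1)$ is decreasing this yields $\bar r(q)\ge k^2/(e-1)$. Feeding this into the identity $\partial_k\bar r=\frac{2}{k^3}\bar r\,[\left \Vert q\right \Vert^2+\bar r]$ stated at the start of this appendix, and using $\left \Vert q\right \Vert^2+\bar r\ge\bar r$, gives $\partial_k\bar r\ge \frac{2k}{(e-1)^2}$ there. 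For the denominator I would use $\kappa_2\le\kappa_2^1+t_d\lambda m^2=(1+t_d\lambda)m^2+\left \Vert q\right \Vert^2$ from appendix \ref{apx:ProofKappa2IterationConverges} together with $\bar r\le k^2$, so that $\kappa_2+\bar r\le(1+t_d\lambda)m^2+2k^2$ on the ball. Multiplying the resulting bound on $K$ by the volume $V_d k^d$ of the ball and absorbing constants (using $(1+t_d\lambda)m^2+2k^2\le 2(1+t_d\lambda)(m^2+k^2)$) produces
\begin{equation*}
f=\left \Vert K\right \Vert_{L^1}\ge c_0\,\frac{k^{d+1}}{(m^2+k^2)^2}
\end{equation*}
with a strictly positive constant $c_0$ depending only on $d$ and $t_d\lambda$.

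With both bounds in hand I would invoke the commutative-algebra version of equation \ref{eq:PropagatorNthDerivative}, exactly as in the proof of corollary \ref{cor:InverseRegularizedPropagatorKDerivativeEstimate}, applied now to the scalar function $f$:
\begin{equation*}
\partial_k^n f^{-1}=\sum_{c\in\CombinationSet{n}}\left(-1\right)^{\# c}\frac{n!}{c!}\,\frac{\prod_{l=1}^{\# c}\partial_k^{c_l}f}{f^{\# c+1}}\,.
\end{equation*}
Inserting $\left \vert\partial_k^{c_l}f\right \vert\le C_K^{c_l}k^{d+1-c_l}(m^2+k^2)^{-2}$ in the numerator and $f^{-(\# c+1)}\le c_0^{-(\# c+1)}(m^2+k^2)^{2(\# c+1)}k^{-(d+1)(\# c+1)}$ from the lower bound, the power counting is uniform across all combinations: since $\sum_l c_l=n$, the numerator contributes $k^{\# c(d+1)-n}(m^2+k^2)^{-2\# c}$ while the denominator factor contributes $k^{(d+1)(\# c+1)}(m^2+k^2)^{-2(\# c+1)}$ in the opposite direction, so every term scales as $(m^2+k^2)^2 k^{-(d+1)}k^{-n}$. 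Summing over $\CombinationSet{n}$ then yields the claim with $\bar C_K^n=\sum_{c\in\CombinationSet{n}}\frac{n!}{c!}c_0^{-\# c-1}\prod_{l=1}^{\# c}C_K^{c_l}$, the case $n=0$ reducing to the base estimate $\left \vert f^{-1}\right \vert\le c_0^{-1}(m^2+k^2)^2 k^{-(d+1)}$. The one point deserving care is justifying differentiation under the integral for $\partial_k^m f$, which is covered by the $L^1$-bounds of corollary \ref{cor:KDerivativeEstimates} via dominated convergence.
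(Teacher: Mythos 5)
Your proposal is correct and follows essentially the same route as the paper: a lower bound on $\left\Vert K \right\Vert_{L^1}$ obtained by restricting to momenta $\left\Vert q \right\Vert \le k$ (the paper does this via the substitution $q = kt$ and truncating the radial integral to $\left[0,1\right]$, you via pointwise bounds on the ball — the same idea), combined with corollary \ref{cor:KDerivativeEstimates} and the commutative version of equation \ref{eq:PropagatorNthDerivative} for $\partial_k^n f^{-1}$, with identical power counting. No gaps.
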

\begin{proof}
We have
\begin{equation}
\begin{aligned}
\Vert &K \Vert_{L^1} \\
&= \int_{\mathbb{R}^d} \frac{\partial_k \bar{r} \left( q \right)}{\left[ \kappa_2 \left( q \right ) + \bar{r} \left( q \right) \right]^2} \mathrm{d} q\\
&\ge 
\int_{\mathbb{R}^d} \frac{\partial_k \bar{r} \left( q \right)}{\left[ \left \Vert q \right \Vert^2 + \left( 1 + t_d \lambda \right) m^2 + k^2 \right]^2} \mathrm{d} q \\
&= \frac{s_{d-1}}{k^{3-d}} \int_0^\infty \frac{t^{d+3} \left[ \cosh \left( t^2 \right) -1 \right]^{-1}}{\left[ t^2 + 1 + \left( 1 + t_d \lambda \right) \lambda \frac{m^2}{k^2} \right]^2} \mathrm{d} t \\
&\ge \frac{s_{d-1} k^{d-3}}{\left[ 2 + \left( 1 + t_d \lambda \right) \frac{m^2}{k^2} \right]^2} \int_0^1 \frac{t^{d+3}}{\left[ \cosh \left( t^2 \right) -1 \right]} \mathrm{d} t \\
&\ge \frac{s_{d-1}}{\max \left \{ 2, 1 + t_d \lambda \right \}^2} X_d
\frac{k^{d+1}}{\left[ k^2 + m^2 \right]^2}
\end{aligned}
\end{equation}
with $X_d > 0$ being the numerical value of the integral which is finite for $d \ge 1$.
Thus, by inverting both sides the theorem is true for $n = 0$.
For $n \in \mathbb{N}$ note that $\left \vert \partial_k^n \left \Vert K \right \Vert_{L^1} \right \vert \le \left \Vert \partial_k^n K \right \Vert_{L^1}$ since $K \ge 0$.
Thus,
\begin{equation}
\begin{aligned}
\Big \vert \partial_k^n &\left \Vert K \right \Vert_{L^1}^{-1} \Big \vert \\
&\le
\sum_{c \in \CombinationSet{n}} \frac{n!}{c!} \left \Vert K \right \Vert_{L^1}^{-1}
\prod_{l = 1}^{\# c}
\frac{\left \Vert \partial_k^{c_l} K \right \Vert_{L^1}}{\left \Vert K \right \Vert_{L^1}} \\
&\le
\sum_{c \in \CombinationSet{n}} \frac{n!}{c!}
\bar{C}_K^0 \frac{\left( k^2 + m^2 \right)^2}{k^{d+1}}
\prod_{l = 1}^{\# c}
\bar{C}_k^0 C_K^{c_l} k^{-c_l} \\
&=
\frac{\left( k^2 + m^2 \right)^2}{k^{d+1+n}}
\sum_{c \in \CombinationSet{n}} \frac{n!}{c!}
\left( \bar{C}_K^0 \right)^{\# c + 1}
\prod_{l = 1}^{\# c}
C_K^{c_l} \, .
\end{aligned}
\end{equation}
\end{proof}
Now, we finally turn to our estimates of the higher correlation functions.
\begin{theorem}
\label{thm:Kappa2nDerivativesEstimate}
Define $\Delta$ as in equation \ref{eq:DeltaDefinition}.
Then, for all $n\in \mathbb{N}_{\ge 2}$, $x \in \mathbb{N}$ and $l \in \mathbb{N}_0$ there exist constants $B_{2n}^{l, x} \ge 0$ such that
\begin{equation}
\label{eq:BigEstimateGeneralForm}
\left \Vert \partial_k^l \kappa_{2n} \right \Vert_{L^\infty}
\le
B_{2n}^{l, x}
\frac{\left \vert m \right \vert^{d + \left(2 - d\right) n + \left( n - 2 \right) \left(1 + \Delta \right)} k^x}{\left( k + \left \vert m \right \vert \right)^{\left( n - 2 \right) \left( 1 + \Delta \right) + x + l}} \, .
\end{equation}
\end{theorem}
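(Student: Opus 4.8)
The plan is to prove inequality \ref{eq:BigEstimateGeneralForm} by induction on $n$, establishing it for \emph{all} $l \in \mathbb{N}_0$ and $x \in \mathbb{N}$ simultaneously, since the inductive step will feed back on the hypothesis at shifted values of these parameters. For the base case $n = 2$ one has $(n-2)(1+\Delta) = 0$ and the claimed exponent of $\left\vert m \right\vert$ is $4-d$, so the assertion reduces to $\left\Vert \partial_k^l \kappa_4 \right\Vert_{L^\infty} \le B_4^{l,x} \left\vert m \right\vert^{4-d} k^x ( k + \left\vert m \right\vert )^{-x-l}$. I would start from the explicit derivative formula \ref{eq:Kappa4KDerivative}. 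Writing $u$ for the dimensionless combination $(\left\Vert p \right\Vert^d + \dots + \left\vert m \right\vert^d)/(k\left\vert m \right\vert^{d-1}) \ge \left\vert m \right\vert / k$, every summand collapses to $\tfrac{\lambda}{\left\vert m \right\vert^{d-4}} k^{-l} u^a e^{-u}$ up to the constant $\gamma^l_a$, because the powers of $\left\vert m \right\vert$ and $k$ cancel identically. The supremum of $u^a e^{-u}$ over the admissible range is a pure constant for $k \gtrsim \left\vert m \right\vert$ and equals $(\left\vert m \right\vert/k)^a e^{-\left\vert m \right\vert/k}$ for small $k$; in the latter regime the exponential screening trick \ref{eq:ExponentialScreeningTrick} (with $v = \left\vert m \right\vert / k \to \infty$) bounds it by any prescribed power $(k/(k+\left\vert m \right\vert))^{x+l}$, which produces exactly the stated form and the freedom in $x$.

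\textbf{Inductive step $n \to n+1$.} Assuming \ref{eq:BigEstimateGeneralForm} for $\kappa_{2n}$, I construct $\kappa_{2n+2}$ through \ref{eq:Kappa2np2FromRightInverse} and differentiate $l$ times, distributing $\partial_k^l$ by the Leibniz rule over the three types of $k$-dependent ingredients of $\rho_{2n}$ and its argument: first, the scalar prefactors $\left\Vert K \right\Vert_{L^1}^{-(\cdots)}$, whose derivatives are controlled by theorem \ref{thm:KL1inverseDerivativeEstimates}; second, the loop factors $K$ inside the integrals of \ref{eq:rhonDefinition}, each integration contributing a factor bounded in $L^1$ by corollary \ref{cor:KDerivativeEstimates}, so that $\int g\, \prod K \le \left\Vert g \right\Vert_{L^\infty} \prod \left\Vert \partial_k^{\cdot} K \right\Vert_{L^1}$; and third, the bracketed argument itself, namely $-2(2\pi)^d \partial_k \kappa_{2n}$ together with the nonlinear terms $\bar{\lambda}_c$. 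The $\partial_k^{\cdot}\kappa_{2n}$ contributions are bounded by the induction hypothesis with $l$ raised by one, while every $\bar{\lambda}_c$ with $c \in \EvenCombinationSet{2n} \setminus \{(n)\}$ is a one-loop integral of a product of \emph{two or more} vertices $\kappa_{2+c_l}$ with $2 + c_l \le 2n$, each again bounded by the induction hypothesis, times regularised propagators $(\kappa_2+\bar{r})^{-1}$ controlled by \ref{eq:inversePropagatorEstimate} and a loop factor $K$ handled as above. A uniform observation simplifies the bookkeeping: for every subset $J$ and internal index in \ref{eq:rhonDefinition}, the prefactor power of $\left\Vert K \right\Vert_{L^1}^{-1}$ and the number of compensating integrations against $K$ differ by exactly one, so that each term of $\rho_{2n} g$ scales like $\left\Vert g \right\Vert_{L^\infty}\, \left\Vert K \right\Vert_{L^1}^{-1} \sim \left\Vert g \right\Vert_{L^\infty}\,(k+\left\vert m \right\vert)^4 k^{-d-1}$.

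\textbf{Power counting and the role of $\Delta$.} The decisive contribution is the linear one. Combining the $\left\Vert K \right\Vert_{L^1}^{-1}$-scaling of $\rho_{2n}$ with the hypothesis bound on $\partial_k \kappa_{2n}$, the net large-$k$ behaviour of the product is $k^{2-d-(n-2)(1+\Delta)}$; matching this against the target $k^{-(n-1)(1+\Delta)}$ forces $1+\Delta = d-2$, which is precisely $\Delta = d-3$ for $d<4$. For $d \ge 4$ the true decay is at least as fast, so the safe value $\Delta = 1$ still yields a valid upper bound. The arbitrary-$x$ freedom propagates because the $k^{-d-1}$ of $\rho_{2n}$ is absorbed by invoking the hypothesis at a correspondingly larger $x'$ (e.g. $x' = x + d + 1$), after which $k^{x'}/k^{d+1} = k^x$; and the $+l$ in the denominator arises because each of the $l$ Leibniz derivatives contributes one extra power of $k^{-1}$ through the three ingredient types. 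Collecting these shows the numerator power of $\left\vert m \right\vert$ advances from $d+(2-d)n+(n-2)(1+\Delta)$ to $d+(2-d)(n+1)+(n-1)(1+\Delta)$ and the denominator power of $(k+\left\vert m \right\vert)$ by exactly $1+\Delta$.

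The main obstacle is precisely this exponent bookkeeping: one must verify that the numerator and denominator powers advance by the stated amounts simultaneously in both regimes $d<4$ and $d\ge4$, and—crucially—that every nonlinear $\bar{\lambda}_c$ term, carrying at least two vertex factors and hence more suppression in both $\left\vert m \right\vert$ and $k$, decays at least as fast as the linear term so that it never spoils the claimed exponent. Once this is checked, the finitely many resulting constants may be collected into a single $B_{2n+2}^{l,x} \ge 0$, closing the induction.
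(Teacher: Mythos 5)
Your proposal follows essentially the same route as the paper: induction on $n$ carried uniformly in $l$ and $x$, with the base case handled via the explicit derivative formula \ref{eq:Kappa4KDerivative} plus the exponential screening trick \ref{eq:ExponentialScreeningTrick}, and the inductive step resting on the observation that each term of $\rho_{2n}$ carries exactly one net factor of $\left\Vert K\right\Vert_{L^1}^{-1}\sim (k^2+m^2)^2k^{-d-1}$, absorbed by invoking the hypothesis at $x'=x+d+1+a$. One point your power counting does not surface: you derive only the constraint $\Delta\le d-3$ from the linear term $\rho_{2n}\partial_k\kappa_{2n}$ and treat $\Delta=1$ for $d\ge 4$ as merely a ``safe'' choice, but in the paper the second, independent constraint $\Delta\le 1$ arises precisely from the deferred step you flag as the main obstacle — the product estimate for $\prod_j\Vert\partial_k^{\beta_j}\kappa_{2+c_j}\Vert_{L^\infty}$ in the nonlinear terms $\bar{\lambda}_c$ with $c\in\EvenCombinationSet{2n}\setminus\{(2n)\}$, where a leftover factor $\left(\left\vert m\right\vert/(k+\left\vert m\right\vert)\right)^{(1-\Delta)(\#c-1)}$ must be bounded by $1$ and hence requires $1-\Delta\ge 0$. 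Your assertion that these terms ``carry more suppression and never spoil the exponent'' is true only because of this inequality, so it, together with the choice of a multi-index $X$ with $\left\vert X\right\vert=x+l$ distributing the $k^x$ freedom over the vertex factors, is the piece that still needs to be written out to close the induction; everything else in your outline matches the paper's argument.
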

\begin{proof}
Let us begin with a proof of the statement for $\kappa_4$ i.e for $n = 2$.
We know from equation \ref{eq:Kappa4KDerivative}, that
\begin{equation}
\begin{aligned}
\Vert \partial_k^l \kappa_4 &\Vert_{L^\infty}
\le
\lambda \sum_{a = 0}^l \left \vert \gamma^l_a \right \vert
\frac{\left \vert m \right \vert^{4 - d + a - a d}}{k^{a + l}}
\exp \left[ - \frac{ \left \vert m \right \vert}{k} \right] \\
&\times
\sup_{y \in \mathbb{R}}
\left( y^2 + \left \vert m \right \vert^d \right)^a
\exp \left[ - \frac{y^2}{k \left \vert m \right \vert^{d-1}} \right] \, .
\end{aligned}
\end{equation}
This expands to
\begin{equation}
\begin{aligned}
\Vert \partial_k^l \kappa_4 &\Vert_{L^\infty}
\le
\lambda \sum_{a = 0}^l \sum_{b = 0}^a \binom{a}{b}
\left \vert \gamma^l_a \right \vert
\frac{\left \vert m \right \vert^{4 - d + a - b d}}{k^{a + l}} \\
&\times
\exp \left[ - \frac{ \left \vert m \right \vert}{k} \right]
\sup_{y \in \mathbb{R}}
y^{2b}
\exp \left[ - \frac{y^2}{k \left \vert m \right \vert^{d-1}} \right] \, ,
\end{aligned}
\end{equation}
such that
\begin{equation}
\begin{aligned}
\Vert \partial_k^l \kappa_4 &\Vert_{L^\infty}
\le
\lambda \exp \left[ - \frac{ \left \vert m \right \vert}{k} \right]
\sum_{a = 0}^l \Bigg( \left \vert \gamma^l_a \right \vert
\frac{\left \vert m \right \vert^{4 - d + a}}{k^{a + l}} \\
&+ \lambda \sum_{b = 1}^a \binom{a}{b} \left( \frac{b}{e} \right)^b \left \vert \gamma^l_a \right \vert
\frac{\left \vert m \right \vert^{4 - d + a - b}}{k^{a + l - b}}
\Bigg) \, .
\end{aligned}
\end{equation}
But from equation \ref{eq:ExponentialScreeningTrick} we have for all $a, b \in \mathbb{N}_0$ with $a \ge b$ and all $x \in \mathbb{N}$
\begin{equation}
\exp \left[ - \frac{ \left \vert m \right \vert}{k} \right]
\le
\left( x + l + a - b \right)!
\frac{k^{x + l + a - b} \left \vert m \right \vert^{b - a}}{k^{x + l} + \left \vert m \right \vert^{x + l}} \, .
\end{equation}
Inserted into the previous equation, this yields
\begin{equation}
\begin{aligned}
\Vert &\partial_k^l \kappa_4 \Vert_{L^\infty}
\le
\lambda \sum_{a = 0}^l \left \vert \gamma^l_a \right \vert \Bigg( 
\frac{\left( x + l + a \right)! \, k^x}{k^{x + l} + \left \vert m \right \vert^{x + l}} \left \vert m \right \vert^{4 - d} \\
&+ \sum_{b = 1}^a \binom{a}{b} \left( \frac{b}{e} \right)^b
\frac{\left( x + l + a - b \right)!}{k^{x + l} + \left \vert m \right \vert^{x + l}} k^x \left \vert m \right \vert^{4 - d}
\Bigg) \, .
\end{aligned}
\end{equation}
The result then follows from
\begin{equation}
\frac{1}{k^{x+l} +\left \vert m \right \vert^{x+l}}
\le
2^{x+l-1}
\frac{1}{\left( k + \left \vert m \right \vert \right)^{x+l}} \, .
\end{equation}
Let us now fix some $n \in \mathbb{N}_{\ge 2}$ and assume the theorem to be true for all $l \in \mathbb{N}_{\ge 2}$ with $l \le n$.
It needs to be shown that the theorem also holds for $\kappa_{2n+2}$ as given by equation \ref{eq:Kappa2np2FromRightInverse}.
By the linearity of $\rho_{2n}$ it suffices to show this for $\rho_{2n} \partial_k \kappa_{2n}$ and $\rho_{2n} \bar{\lambda}_c$ separately for all $c \in \EvenCombinationSet{2n} \setminus \left \{ \left( 2n \right) \right \}$.
In either case, for $l \in \mathbb{N}_0$ and a sufficiently regular $\SymmetricGroupN{2n-1}^\ast$-symmetric function $g$ we have
\begin{equation}
\label{eq:BigEstimateIntermediate}
\begin{aligned}
&\left \Vert \partial_k^l \rho_{2n} g \right \Vert_{L^\infty} \\
&\le
\sum_{J \subseteq \left \{ 0, ..., 2n + 1 \right \}}
\sum_{l = 0}^{\left \lfloor \frac{2n - 1 - \# J}{2} \right \rfloor}
\sum_{a = 0}^l \sum_{b = 0}^a
\binom{l}{a} \binom{a}{b} \\
&\phantom{\le} \times
\left \vert \alpha^{2n}_{\# J, l} \right \vert
\left \Vert \partial_k^{l-a} g \right \Vert_{L^\infty} 
\left \Vert \partial_k^{a-b} K^{\otimes 2n - 1 - \# J - l} \right \Vert_{L^1} \\
&\phantom{\le} \times
\left \vert \partial_k^b \left \Vert K \right \Vert_{L^1}^{- \left( 2n - \# J - l \right)} \right \vert \, ,
\end{aligned}
\end{equation}
where we have used that $\int_{\mathbb{R}^d} K = \left \Vert K \right \Vert_{L^1}$ since $K > 0$.
Employing corollary \ref{cor:KDerivativeEstimates}, we get
\begin{equation}
\begin{aligned}
\big \Vert \partial_k^a K^{\otimes b} \big \Vert_{L^1}
&\le
\sum_{\substack{\alpha \in \mathbb{N}_0^c \\ \left \vert \alpha \right \vert = a}}
\frac{a!}{\alpha!}
\prod_{j = 1}^b
\left \Vert \partial_k^{\alpha_j} K \right \Vert_{L^1} \\
&\le
\sum_{\substack{\alpha \in \mathbb{N}_0^b \\ \left \vert \alpha \right \vert = a}}
\frac{a !}{\alpha!}
\prod_{j = 1}^b
C_K^{\alpha_j} \frac{k^{d+1-\alpha_j}}{\left( m^2 + k^2 \right)^2} \\
&=:
D_K^{a, c} \left( \frac{k^{d+1}}{\left( m^2 + k^2 \right)^2} \right)^b k^{-a}
\end{aligned}
\end{equation}
for all $a, b \in \mathbb{N}_0$\footnote{The case $b = 0$ is trivial as $K^{\otimes 0} = 1$.}.
Furthermore, from theorem \ref{thm:KL1inverseDerivativeEstimates} one has
\begin{equation}
\begin{aligned}
\left \vert \partial_k^a \left \Vert K \right \Vert_{L^1}^{-b} \right \vert
&\le
\sum_{\substack{\alpha \in \mathbb{N}_0^{b} \\ \left \vert \alpha \right \vert = a}}
\frac{a!}{\alpha!}
\prod_{j = 1}^c
\left \vert \partial_k^{\alpha_j} \left \Vert K \right \Vert_{L^1}^{-1} \right \vert \\
&\le
\sum_{\substack{\alpha \in \mathbb{N}_0^b \\ \left \vert \alpha \right \vert = a}}
\frac{a!}{\alpha!}
\prod_{j = 1}^b
\bar{C}_K^{\alpha_j} \frac{\left( k^2 + m^2 \right)^2}{k^{d+1}} k^{-\alpha_j} \\
&=:
\bar{D}_K^{a, b} \left( \frac{\left( m^2 + k^2 \right)^2}{k^{d+1}} \right)^{b} k^{-a}
\end{aligned}
\end{equation}
for all $a \in \mathbb{N}_0$ and $b \in \mathbb{N}$.
The insertion of these two inequalities into equation \ref{eq:BigEstimateIntermediate} reveals the important intermediate result
\begin{equation}
\label{eq:Rho2nGKDerivativeEstimate}
\begin{aligned}
&\left \Vert \partial_k^l \rho_{2n} g \right \Vert_{L^\infty} \\
&\le
\sum_{J \subseteq \left \{ 1, ..., 2n + 2 \right \}}
\sum_{j = 0}^{\left \lfloor \frac{2n - 1 - \# J}{2} \right \rfloor}
\sum_{a = 0}^l \sum_{b = 0}^a \\
&\phantom{\le}
\binom{l}{a} \binom{a}{b}
\left \vert \alpha^{2n}_{\# J, j} \right \vert \\
&\phantom{\le} \times
D_K^{a-b,2 n - 1 - \# J - j}
\bar{D}_K^{b,2 n - \# J - j} \\
&\phantom{\le} \times
\frac{\left( m^2 + k^2 \right)^2}{k^{d + 1 + a}}
\left \Vert \partial_k^{l-a} g \right \Vert_{L^\infty} \\
&:=
\sum_{a = 0}^l E_{2n}^{l,a}
\frac{\left( m^2 + k^2 \right)^2}{k^{d + 1 + a}}
\left \Vert \partial_k^{l-a} g \right \Vert_{L^\infty} \, .
\end{aligned}
\end{equation}
The divergent behaviour for $k \to 0$ elucidates the need for the extremely strong IR regularity of $\kappa_4$ as imposed in equation \ref{eq:Kappa4Definition}.
Now, consider the case $g = \partial_k \kappa_{2n}$ and $x \in \mathbb{N}$:
\begin{equation}
\begin{aligned}
&\left \Vert \partial_k^l \rho_{2n} \partial_k \kappa_{2n} \right \Vert_{L^\infty} \\
&\le
\sum_{a = 0}^l E_{2n}^{l,a}
\frac{\left( m^2 + k^2 \right)^2}{k^{d + 1 + a}}
\left \Vert \partial_k^{l+1-a} \kappa_{2n} \right \Vert_{L^\infty} \\
&\le
\sum_{a = 0}^l E_{2n}^{l,a} B_{2n}^{l+1-a, x+d+1+a}
\frac{\left( m^2 + k^2 \right)^2}{k^{d + 1 + a}} \\
&\phantom{\le \sum_{a = 0}^l} \times
\frac{\left \vert m \right \vert^{d + \left(2-d \right) n + \left( n - 2 \right)\left(1 + \Delta \right)} k^{x+d+1+a}}{\left(k + \left \vert m \right \vert \right)^{\left(n-2\right)\left(1 + \Delta\right) + l + 2 + d + x}} \\
&=
\sum_{a = 0}^l E_{2n}^{l,a} B_{2n}^{l+1-a, x+d+1+a} \\
&\phantom{\le \sum_{a = 0}^l} \times
\frac{\left( m^2 + k^2 \right)^2 \left \vert m \right \vert^{d - 3 - \Delta}}{\left( k + \left \vert m \right \vert \right)^{d + 1 - \Delta}} \\
&\phantom{\le \sum_{a = 0}^l} \times
\frac{\left \vert m \right \vert^{d + \left(2 - d\right)\left(n + 1 \right) + \left( n - 1 \right) \left(1 + \Delta \right)} k^{x}}{\left(k + \left \vert m \right \vert \right)^{\left(n -1\right)\left( 1 + \Delta \right) + l + x}} \\
&\le
\sum_{a = 0}^l E_{2n}^{l,a} B_{2n}^{l+1-a, x+d+1+a} \\
&\phantom{\le \sum_{a = 0}^l} \times
\frac{\left \vert m \right \vert^{d + \left(2 - d\right)\left(n + 1 \right) + \left( n - 1 \right) \left(1 + \Delta \right)} k^{x}}{\left(k + \left \vert m \right \vert \right)^{\left(n -1\right)\left( 1 + \Delta \right) + l + x}} \, .
\end{aligned}
\end{equation}
This is the expected result and also shows that the use of these methods requires $d - 3 - \Delta \ge 0$.
Otherwise, the last inequality would not generally hold.
It remains to estimate the $\rho_{2n} \bar{\lambda}_c$ terms.
Let $c \in \EvenCombinationSet{2n} \setminus \left \{ \left( 2n \right) \right \}$.
Then, obviously
\begin{equation}
\left \Vert \partial_k^l \bar{\lambda}_c \right \Vert_{L^\infty} \le \left \Vert \partial_k^l \lambda_c \right \Vert_{L^\infty} \, ,
\end{equation}
so that we need not bother with symmetrisation.
In total, for $g = \bar{\lambda}_c$ and $l \in \mathbb{N}_0$ we get
\begin{equation}
\label{eq:Rho2nLambdaCKDerivativeEstimate1}
\begin{aligned}
\big \Vert \partial_k^l \rho_{2n} \bar{\lambda}_c &\big \Vert_{L^\infty} \\
&\le
\sum_{a = 0}^l E_{2n}^{l,a}
\frac{\left( m^2 + k^2 \right)^2}{k^{d + 1 + a}}
\left \Vert \partial_k^{l-a} \lambda_c \right \Vert_{L^\infty} \, .
\end{aligned}
\end{equation}
Estimating $\left \Vert \partial_k^{l-a} \lambda_c \right \Vert_{L^\infty}$ is rather cumbersome with
\begin{equation}
\begin{aligned}
\left \Vert \partial_k^l \lambda_c \right \Vert_{L^\infty}
&\le
\sum_{a = 0}^{l} \sum_{b = 0}^{a}
\binom{l}{a} \binom{a}{b}
\left \Vert \partial_k^{l-a} K \right \Vert_{L^1} \\
&\phantom{\le} \times
\left \Vert \partial_k^{a-b} \left( \left[ \kappa_2 + r \right]^{-1} \right)^{\otimes \# c - 1} \right \Vert_{L^\infty} \\
&\phantom{\le} \times
\left \Vert \partial_k^b
\bigotimes_{j = 1}^{\# c} \kappa_{2 + c_j} \right \Vert_{L^\infty}
\end{aligned}
\end{equation}
for all $l \in \mathbb{N}_0$.
However, using corollary \ref{cor:KDerivativeEstimates} and \ref{cor:InverseRegularizedPropagatorKDerivativeEstimate} one obtains
\begin{equation}
\begin{aligned}
&\left \Vert \partial_k^l \lambda_c \right \Vert_{L^\infty} \\
&\le
\sum_{a = 0}^{l} \sum_{b = 0}^{a}
\binom{l}{a} \binom{a}{b}
C_K^{l - a} \frac{k^{d + 1 - l + a}}{\left( m^2 + k^2 \right)^2} \\
&\phantom{\le} \times
\sum_{\substack{\alpha \in \mathbb{N}_0^{\# c - 1} \\ \left \vert \alpha \right \vert = a - b}}
\frac{\left(a - b \right)!}{\alpha!}
\prod_{j = 1}^{\# c - 1} B_{2,r}^{\alpha_j} \frac{k^{- \alpha_j}}{m^2 + k^2} \\
&\phantom{\le} \times \left(
\sum_{\substack{\beta \in \mathbb{N}_0^{\# c} \\ \left \vert \beta \right \vert = b}}
\frac{b!}{\beta!}
\prod_{i = 1}^{\# c}
\left \Vert \partial_k^{\beta_i} \kappa_{2 + c_i} \right \Vert_{L^\infty}
\right) \\
&=
\sum_{a = 0}^{l} \sum_{b = 0}^{a}
\binom{l}{a} \binom{a}{b}
C_K^{l - a} \frac{k^{d + 1 - l + b}}{\left( m^2 + k^2 \right)^{\# c + 1}} \\
&\phantom{\le} \times
\sum_{\substack{\alpha \in \mathbb{N}_0^{\# c - 1} \\ \left \vert \alpha \right \vert = a - b}}
\frac{\left(a - b \right)!}{\alpha!} B_{2,r}^{\alpha_j} \\
&\phantom{\le} \times \left(
\sum_{\substack{\beta \in \mathbb{N}_0^{\# c} \\ \left \vert \beta \right \vert = b}}
\frac{b!}{\beta!}
\prod_{i = 1}^{\# c}
\left \Vert \partial_k^{\beta_i} \kappa_{2 + c_i} \right \Vert_{L^\infty}
\right) \\
&:=
\sum_{\substack{\beta \in \mathbb{N}_0^{\# c} \\ \left \vert \beta \right \vert \le l}}
\frac{F_c^{l,\beta} k^{d + 1 - l + \left \vert \beta \right \vert}}{\left( m^2 + k^2 \right)^{\# c + 1}}
\prod_{j = 1}^{\# c}
\left \Vert \partial_k^{\beta_j} \kappa_{2 + c_j} \right \Vert_{L^\infty} \, .
\end{aligned}
\end{equation}
Inserting this result into equation \ref{eq:Rho2nLambdaCKDerivativeEstimate1} yields
\begin{equation}
\label{eq:Rho2nLambdaCKDerivativeEstimate2}
\begin{aligned}
&\left \Vert \partial_k^l \rho_{2n} \bar{\lambda}_c \right \Vert_{L^\infty} \\
&\le
\sum_{a = 0}^l \sum_{\substack{\beta \in \mathbb{N}_0^{\# c} \\ \left \vert \beta \right \vert \le l - a}}
E_{2n}^{l,a} F_c^{l-a,\beta} \\
&\phantom{\le} \times
\frac{k^{\left \vert \beta \right \vert - l}}{\left( m^2 + k^2 \right)^{\# c - 1}}
\prod_{j = 1}^{\# c}
\left \Vert \partial_k^{\beta_j} \kappa_{2 + c_j} \right \Vert_{L^\infty} \\
&:=
\sum_{\substack{\beta \in \mathbb{N}_0^{\# c} \\ \left \vert \beta \right \vert \le l}}
\frac{G_c^{l,\beta} k^{\left \vert \beta \right \vert - l}}{\left( m^2 + k^2 \right)^{\# c - 1}}
\prod_{j = 1}^{\# c}
\left \Vert \partial_k^{\beta_j} \kappa_{2 + c_j} \right \Vert_{L^\infty} \, ,
\end{aligned}
\end{equation}
so that we just need to use proper estimates for $\left \Vert \partial_k^{\beta_n} \kappa_{2 + c_n} \right \Vert_{L^\infty}$.
To that end, let $x \in \mathbb{N}$ be arbitrary and fix some multi-index $X \in \mathbb{N}_0^{\# c}$ with $\left \vert X \right \vert = x + l$.
Invoking the induction hypothesis leads to the conclusion
\begin{equation}
\begin{aligned}
\Big \Vert& \partial_k^{\beta_j} \kappa_{2 + c_j} \Big \Vert \\
&\le
B_{2+c_j}^{\beta_j,X_j}
\frac{\left \vert m \right \vert^{d + \left( 2 - d \right) \left( \frac{c_j}{2} + 1 \right) + \left( \frac{c_j}{2} - 1 \right) \left( 1 + \Delta \right)} k^{X_j}}{\left( k + \left \vert m \right \vert \right)^{\left( \frac{c_j}{2} - 1 \right) \left( 1 + \Delta \right) + X_j + \beta_j}}
\end{aligned}
\end{equation}
for all even $c_j \in \mathbb{N}_{\le 2n-2}$ and all $\beta_j \in \mathbb{N}_0$.
In particular, this translates to
\begin{equation}
\begin{aligned}
&\prod_{j = 1}^{\# c}
\left \Vert \partial_k^{\beta_j} \kappa_{2 + c_j} \right \Vert_{L^\infty} \\
&\le
\frac{\left \vert m \right \vert^{\left(1 - \Delta\right) \# c + \left( 3 + \Delta - d \right) n} k^{x + l}}{\left( k + \left \vert m \right \vert \right)^{\left( 1 + \Delta \right) \left( n - \# c \right) + x + b + l}}
\prod_{j = 1}^{\# c} B_{2+c_j}^{\beta_j,X_j} \\
&=
\frac{\left \vert m \right \vert^{\left(1 - \Delta\right) \left( \# c - 1 \right)}}{\left( k + \left \vert m \right \vert \right)^{\left(1 - \Delta\right) \left( \# c - 1 \right)}} \\
&\phantom{=} \times
\frac{\left \vert m \right \vert^{\left( 3 + \Delta - d \right) n + 1 - \Delta} k^{x + l}}{\left( k + \left \vert m \right \vert \right)^{n + \Delta n - 2 \# c + x + b + l + 1 - \Delta}}
\prod_{j = 1}^{\# c} B_{2+c_j}^{\beta_j,X_j} \\
&\le
\frac{\left \vert m \right \vert^{\left( 3 + \Delta - d \right) n + 1 - \Delta} k^{x + l}}{\left( k + \left \vert m \right \vert \right)^{n + \Delta n - 2 \# c + x + b + l + 1 - \Delta}}
\prod_{j = 1}^{\# c} B_{2+c_j}^{\beta_j,X_j} \\
&\le
\frac{\left \vert m \right \vert^{\left( 3 + \Delta - d \right) n + 1 - \Delta} k^{x + l} \left( k^2 + m^2 \right)^{\# c - 1}}{\left( k + \left \vert m \right \vert \right)^{n + \Delta n + x + l - 1 - \Delta} k^b} \\
&\phantom{\le} \times
2^{\# c - 1}
\prod_{j = 1}^{\# c} B_{2+c_j}^{\beta_j,X_j}
\end{aligned}
\end{equation}
with $\left \vert c \right \vert = 2n$ and $\left \vert \beta \right \vert = b$.
Here, it may be seen that it was important to choose $\Delta \le 1$
Otherwise, the second inequality would in general not hold.
Thus, the largest $\Delta$ that is possible using these methods is $\max \left \{ d - 3, 1 \right \}$ which precisely corresponds to the choice made in equation \ref{eq:DeltaDefinition}.
We may now insert this result into equation \ref{eq:Rho2nLambdaCKDerivativeEstimate2} obtaining
\begin{equation}
\begin{aligned}
\left \Vert \partial_k^l \rho_{2n} \bar{\lambda}_c \right \Vert_{L^\infty}
&\le
2^{\# c - 1}
\frac{\left \vert m \right \vert^{\left( 3 + \Delta - d \right) n + 1 - \Delta} k^x}{\left( k + \left \vert m \right \vert \right)^{\left( n - 1 \right) \left( \Delta + 1 \right) + x + l}} \\
&\phantom{\le} \times
\sum_{b = 0}^l \sum_{\substack{\beta \in \mathbb{N}_0^{\# c} \\ \left \vert \beta \right \vert = b}}
G_c^{l,\beta}
\prod_{j = 1}^{\# c} B_{2+c_j}^{\beta_j,X_j} \\
&:=
H_c^{l,X} \frac{\left \vert m \right \vert^{\left( 3 + \Delta - d \right) n + 1 - \Delta} k^x}{\left( k + \left \vert m \right \vert \right)^{\left( n - 1 \right) \left( \Delta + 1 \right) + x + l}} \, .
\end{aligned}
\end{equation}
Due to our appropriate choice of $X$ and our careful estimates the right-hand side precisely corresponds to the one of equation \ref{eq:BigEstimateGeneralForm} with $n$ replaced by $n + 1$.
\end{proof}

% FIXME: For some reason the bibliography overlaps with the text...
\cleardoublepage
\bibliography{bibliography}
\end{document}